\newtheorem{definition}{Definition}
\newtheorem{theorem}{Theorem}
\newtheorem{example}{Example}
\newenvironment{proof}{\vspace{8pt}
\noindent{\bf Proof}: }{{\hfill {\large $\Box$}} \vspace{8pt}}
\newcommand{\remove}[1]{}
\begin{document}

  \title{
  A Unified Framework for Constructing Information-Theoretic Private Information Retrieval}

\author{Liang Feng Zhang}

\affil{\small School of Information Science and Technology, \\ ShanghaiTech University, Shanghai, China
\\ zhanglf@shanghaitech.edu.cn}

\date{}

\maketitle

\begin{abstract}
 Retrieving up-to-date information from a publicly accessible database poses significant threats to
the user's privacy. {\em Private information retrieval} (PIR) protocols 
allow a user to retrieve any entry from a database, without revealing the identity of the entry being retrieved
to the server(s). 
Such protocols have found numerous applications in both
 theoretical studies and  real-life scenarios. 
The existing PIR constructions mainly give  
multi-server {\em information-theoretic} PIR (IT-PIR) protocols  
or  single-server computational PIR (CPIR)  protocols. 
Compared with CPIR, IT-PIR protocols are computationally more efficient and 
secure in the presence of   unbounded servers.
The  most classical and challenging problem in the realm of IT-PIR  is
constructing protocols with lower {\em communication complexity}. 
In this review, we introduce a new discrete structure called 
{\em families of
orthogonal arrays with span capability} (FOASC) and propose a unified framework
for constructing IT-PIR protocols. 
We show how the most influential IT-PIR protocols in the 
literature can be captured by the framework. We also put forward several interesting
open problems concerning FOASC, whose solutions may 
result in innovative IT-PIR protocols.

\vspace{2mm}
\noindent
{\bf Keywords:} private information retrieval;
  families of orthogonal arrays with span capability
\end{abstract}

\section{Introduction}

Publicly accessible databases are indispensable resources for retrieving up-to-date information. Access to such databases 
 poses significant risks to the privacy of the user, since the database server(s)
 may monitor the user's queries and infer what the user is after. 
Usually the user's retrieval  intent is  highly valuable 
 and  needs careful protection. For example, 
 for a stock-market  database an   investor's  retrieval intent may 
  influence the stock's price; for a patent database a company's retrieval intent may
attract  unexpected  pursuer of the patent; for a Merkle proof database on which a 
blockchain system such as Ethereum is based, a user's retrieval intent may 
link the user to the account being read and eventually lead to deanonymization. 
  
Private information retrieval (PIR) protocols \cite{CGKS95} are cryptographic protocols that are specifically designed to  ensure the users' privacy. 
Such protocols allow a user to retrieve an entry
$x_i$ from a database ${\bm x}=x_1\cdots x_n\in \{0,1\}^n$, without revealing 
the retrieval index   $i\in[n]$ to 
the server.
At first glance, the  requirements posed by PIR  seem 
quite absurd. 
However, there does exist a trivial solution that {\em perfectly}
hides $i$ from the server,  where the user simply downloads the entire 
database ${\bm x}$ from the server and then locally extracts  $x_i$.  
In particular, the perfect privacy  is {\em information-theoretic} and   means that the server learns absolutely no
information about $i$, even if it has unlimited computing power. 
This trivial solution incurs a {\em communication cost} of $O(n)$, which 
could be prohibitive if the database consists of millions or billions of entries.  
Unfortunately, in their pioneer work \cite{CGKS95}, Chor, Goldreich, Kushilevitz and Sudan
showed that the $O(n)$ communication cost of the trivial solution is asymptotically optimal,
 if there is only {\em one server} and {\em perfect privacy}   is needed. 
 Therefore, to have a PIR solution of communication cost $o(n)$, the user must consider two possible
 relaxations: (1)    resort to multiple  servers; (2)  give up the perfect privacy.

 \vspace{2mm}
\noindent
{\bf Two flavors of PIR.}
   Under the first relaxation, the user may 
 communicate with $k~ (k>1)$ servers,  send 
 a query to every server, receive an answer from the server, and 
 finally reconstruct   $x_i$ from the $k$ answers. Specifically, 
 each of the servers should store a copy of the same database $\bm x$ and answer the user's query 
 with  $\bm x$. To differentiate from single-server solutions, the 
 $k$ servers must not collude with each other. If the user's retrieval index $i$ is perfectly  (i.e.,
 information-theoretically)  hidden
from  the collusion of any $t~(t<k)$ out of the $k$ servers,  then the protocol is said to be a 
 $t$-{\em private} $k$-server   information-theoretic   PIR (IT-PIR) \cite{CGKS95}, or {\em $(t,k)$-PIR} for short. 
Under the second relaxation, the user may properly encode its retrieval index 
$i$ as a query, which essentially leaks no information about 
$i$ to   any {\em computationally bounded} server that runs polynomial-time algorithms, such that the server remains able to
compute an encoded form of $x_i$ to the user. In particular, the 
privacy of $i$ must be built on various
number-theoretic problems (e.g., the quadratic residuosity problem, the composite residuosity 
problem), which are hard to solve in feasible time by
the computationally bounded server. 
Protocols in this category have been called single-server  computational PIR    \cite{KO97}, or {\em CPIR} for short.

 \vspace{2mm}
\noindent
{\bf Practical influence of PIR.}
 Both CPIR and IT-PIR 
are  important cryptographic primitives that have practical influences. Today  
 PIR protocols have 
  found numerous applications in real-life scenarios, e.g.,  
  private database search \cite{WGY+17}, metadata hiding messaging \cite{AS16,ACL+18}, private media consumption \cite{GCM+16},
 private contact
discovery \cite{KRS19},  private blocklist lookups \cite{KG21},  privacy-friendly advertising \cite{BKM+12,SHD21}, 
 certificate transparency \cite{HHG+23}, 
private web search \cite{HDG+23}, private electronic commerce \cite{HOG11},
and private location based services \cite{GKK+08}, among others. 
Recently, commercial systems such as 
Microsoft's Password Monitor \cite{CHL+18},  Google's  Device Enrollment 
\cite{Google}, Blyss's  Spiral \cite{MW22}, and 
Brave's  FrodoPIR \cite{DPC23} have integrated the functionality of PIR and 
signed the real world deployment of PIR.

\vspace{2mm}
\noindent
{\bf Theoretical influence of PIR.}
On the theoretical side, both IT-PIR and CPIR 
are fundamental  building blocks of many other cryptographic primitives and have their featured applications. 
IT-PIR protocols may give locally decodable codes (LDCs)
\cite{KT00,Yek12,Z14},  error-correcting codes 
that can recover any bit of the message by reading  a few bits  of the codeword  and guarantee 
 correct recovery even if a constant fraction of 
the codeword have been {\em adversarially} corrupted.  IT-PIR protocols can also be used to 
construct multi-party information-theoretically private protocols \cite{IK04,BCWZ12}. 
CPIR protocols imply many important 
cryptographic primitives such as unconditionally hiding
commitment \cite{BIKM99}, oblivious transfer \cite{CMO00,NP99},   collision-resistant hash functions \cite{IKO05}, 
and
 efficient zero-knowledge arguments \cite{KR06}.

\vspace{2mm}
\noindent
{\bf Communication cost.}
The efficiency of PIR protocols is mainly measured by
   {\em communication complexity} \cite{CGKS95},     the total number of bits that have to be exchanged between the user and the server(s) in order to retrieve one bit from the database.  
The most classical and challenging problem in the realm of 
PIR is constructing  protocols with lower  
  communication complexity for a given number of servers. 
  While there are $O(\log n)$-server PIR protocols with polylogrithmic 
  (in $n$) communication complexity, 
the main focus has been  protocols that use a {\em constant}  number of servers. 
For IT-PIR, after a long line of
arduous explorations  \cite{CGKS95,Amb97,Ito99,IK99,BI01,WY05,Yek07,Rag07,KY08,Efr09,IS10,CFLWZ13,DG15,ABL24,GKS25}, 
 today the most efficient protocols that use a constant number of servers    have reached 
 a  communication complexity that is subpolynomial in $n$. 
For CPIR, protocols   \cite{KO97,CMS99,Cha04,Lip05,GR05,MG07,Lip09,GKL10,YKPB13,DC14,KLLPT15,LP17} based on various 
cryptographic assumptions have been proposed and  the up-to-date ones 
may achieve an optimal rate that is close to 1.

\vspace{2mm}
\noindent
{\bf Computation cost.}
Beimel,   Ishai and Malkin \cite{BIM00} showed that in any PIR protocol 
 every entry of the database $\bm x$ must be accessed 
at least once by the servers,  in order for the user's retrieval index to be private. 
The observation is reasonable because any non-accessed entry $x_j$ cannot be of the user's interest 
and thus reveals  partial information about the user's retrieval index $i$ (i.e., $i\neq j$) 
to the server(s).  Consequently, in any PIR protocol the servers computation cost must be 
  $\Omega(n)$, which could be rather undesirable for a large $n$. 
  In particular, for IT-PIR  the servers may 
  need to perform $\Omega(n)$ field operations; for CPIR,
  the $\Omega(n)$ operations could be expensive public-key operations such as 
  exponentiations.  
Sion and  Carbunar \cite{SC07} even concluded 
that   deployment
of non-trivial CPIR protocols on real hardware
 would be orders of magnitude less
time-efficient than trivially transferring the entire database. 
Starting from \cite{BIM00}, there have been a long line of 
research that tried to obtain computationally efficient   
  IT-PIR  \cite{BIM00,WY05,SWZ24,ISW24} and 
CPIR \cite{BIPW17,CHR17,CK20,GZS24,LMW23,RMS24,PPY18,SACM21,ZLTS23,ZPZS24,ACL+18,ALP+21,MCR21,MK22,GHK22,MW22,DPC23,HHG+23,ZLTS23,LP23,LLW24,LMRS24,
LP24,FLLP24,MW24,HPPY25,WR25} protocols.

\vspace{2mm}
\noindent
{\bf PIR against malicious servers.}
While most of the existing PIR protocols assume  {\em honest-but-curious}  servers
that always faithfully execute the protocol,   {\em malicious} servers
may arbitrarily deviate from the predefined specifications
and thus prevent the correct
execution of the protocol. 
In particular, the malicious servers may not respond to the user's
queries or even  tamper with the   responses. 
Such behaviors may lead to failure in retrieval. 
 Beimel and Stahl   \cite{BS02}
initiated the study of {\em  robust} $k$ out of $\ell$ PIR   protocols that 
allow the user to contact $\ell$ servers and successfully retrieve
$x_i$ as long as at least  $k$ out of the $\ell$ servers respond, and
 $b$-{\em Byzantine} robust $k$ out of $\ell$ PIR   protocols
 that still guarantee successful retrieval even if 
 $b$ out of the $k$ responses are tampered with.
 For $k=\ell$,  today such protocols are also termed as 
 $b$-{\em error correcting} $k$-server PIR protocols  
   \cite{BS02,Gol07,DGH12,Kur19,ZWW22,EKN22,EKN24}.
   Zhang and Safavi-Naini \cite{ZS14}
   initiated the study of $b$-{\em error detecting} $k$-server PIR protocols 
   that can detect  the existence of wrong responses. 
   Such protocols     \cite{ZLL22,CTD23,ZW22,KZ22,KZ23,KDK24,EKN221,CNG23,DT24} are particular useful when the
   PIR servers are implemented by untrusted  cloud servers.

\vspace{2mm}
\noindent
{\bf IT-PIR vs. CPIR.}
Compared with IT-PIR protocols,  
CPIR protocols do not require the user to communicate with
multiple non-colluding servers, an arguably strong assumption. Furthermore, 
they may achieve much lower communication complexity, compared with 
constant-server IT-PIR.
On the negative side, CPIR protocols are computationally extensive and cannot 
have short queries or responses, which are 
crucial for constructing LDCs. Also, the cryptographic assumptions underlying CPIR may 
become fragile in the presence of modern computing technologies, which however 
cannot affect the security of IT-PIR.  
In this review, we are restricted to IT-PIR and focus on the long line of works on
constructing 
communication efficient protocols, which has been the most challenging research problem.

  \vspace{2mm}
\noindent
{\bf Related work.}
Our review focuses on a unified framework for constructing IT-PIR protocols in the honest-but-curious server model and is
 different from several excellent existing   reviews, which either 
cover IT-PIR constructions before 2007  and provide  no
unified framework \cite{Gas04,Bei08,Blu11} or  focus  on CPIR \cite{OS07}.  

   \vspace{2mm}
\noindent
{\sc Organization.}
In Section \ref{sec:pre} we give  the definitions of IT-PIR and orthogonal arrays;
In Section \ref{sec:frwk} we propose the notion of family of OAs with  span capability (FOASC) and give an 
FOASC based  framework for constructing IT-PIR;
in Section \ref{sec:pir} we show how several most influential IT-PIR constructions can be captured by
the proposed framework; in Section \ref{sec:open} we discuss several open problems concerning FOASC. 
Finally, Section \ref{sec:con}   concludes the review.

\section{Preliminaries}
\label{sec:pre}

{\bf Notation.} For any integer $n>0$, we denote  $[n]=\{1,\ldots,n\}$. 
For any  prime power  $p$, we denote by $\mathbb{F}_p$ the {\em finite field} of $p$ elements. 
For any two vectors $\bm u=(u_1,\ldots,u_m)$ and $\bm v=(v_1,\ldots,v_n)$,
we denote by    $\bm u\|\bm v=(u_1,\ldots,u_m,v_1,\ldots,v_n)$ the {\em concatenation} of $\bm u$ and $\bm v$.
For any $m\times n$ matrix $\bm Q$, we denote by $\bm Q^\top$ the {\em transpose} of $\bm Q$ and denote by
$Q_{i,j}$ the {\em $(i,j)$-entry} of $\bm Q$ for all $i\in[m]$ and $j\in[n]$.   
For any integers $h>0$ and $i\in[h]$, we denote by $\bm e_h^{(i)}$ the length-$h$ {\em unit vector}
whose $i$th entry is 1 and all other entries are 0.
For any predicate $P$, we denote by $1_P$  the indicator value for $P$, i.e., $1_P=1$ if $P$ is true and
0 otherwise. For example, $1_{3\in [2]}=0$. 
For any vectors $\bm z=(z_1,\ldots,z_n)$ and $\bm u=(u_1,\ldots,u_n)$, we denote 
${\bm z}^{\bm u}=(z_1)^{u_1}\cdots (z_n)^{u_n}$. 

\subsection{Private Information Retrieval}

A $t$-private $k$-server PIR  ($(t,k)$-PIR) protocol  involves $k$ {\em servers} ${\cal S}_1, \ldots, {\cal S}_k$, each storing  a copy of the database 
  $\bm x=(x_1,\ldots, x_n)\in\{0,1\}^n$, and a {\em user} $\cal U$ who  wants to retrieve a database entry
$x_i$, without revealing the retrieval index $ i \in [n]$ to any $t$ out of the $k$ servers. 
 \begin{definition}
[Private Information Retrieval]
 A {\em $t$-private $k$-server private information retrieval ($(t,k)$-PIR)} protocol ${\cal P= (Q,A, C)}$
  consists of three algorithms as follows:
 \begin{itemize}
 \item $(q_1,\ldots,q_k,{\sf aux})\leftarrow {\cal Q}(k,n,i)$:  a randomized {\em querying} algorithm
 that takes the public parameters $k,n$ and the user's private {\em retrieval index} $i\in[n]$
 as input, and outputs $k$ {\em queries} $q_1,\ldots,q_k$ together with an 
 {\em auxiliary information string}
 $\sf aux$ for reconstruction.

\item $a_j\leftarrow {\cal A}(k, j, \bm x, q_j)$:
a deterministic {\em answering} algorithm
that takes the database $\bm x=(x_1,\ldots, x_n)\in \{0,1\}^n$ and the query $q_j$
as input, and outputs an {\em answer} $a_j $.

\item $x_i\leftarrow {\cal C}(k, n, a_1, \ldots, a_k, {\sf aux})$:
a deterministic {\em reconstructing} algorithm that takes $\sf aux$ and
the  $k$ answers $a_1,\ldots,a_k$   as input, and outputs the  {\em target  entry}  $x_i$. 
 
\end{itemize}
For a  protocol as above to be a $(t,k)$-PIR, the following   requirements should be satisfied:
\begin{itemize}
\item {\bf Correctness}. 
Informally,   if all algorithms of the protocol $\cal P$ are 
faithfully executed, then the reconstructing algorithm
always  outputs the correct value of the target entry. Formally,  for  any $\bm x \in \{0, 1\}^n$,
$i \in [n]$, $(q_1,\ldots,q_k,{\sf aux})\leftarrow {\cal Q}(k,n,i)$, and   
 $\{a_j\leftarrow {\cal A}(k, j, \bm x, q_j)\}_{j=1}^k$,  
$${\cal C}(k, n, a_1, \ldots, a_k, {\sf aux})=x_i.$$

\item {\bf $t$-Privacy}. Informally,  
any   collusion of   $\leq t$ servers learns no information about the user's retrieval index $i$. 
Formally,   for any $i_1, i_2 \in [n]$,  
 any subset $T \subseteq [k]$ of size $\leq t$,  
$${\cal Q}_T (k, n, i_1)\xlongequal{\rm id} {\cal Q}_T (k, n, i_2),$$ where ${\cal Q}_T$ denotes
concatenation of $j$-th outputs of $\cal Q$ for all  $j\in   T$ and    ``$\xlongequal{\rm id}$"
means that two distributions are identical.
\end{itemize}

\end{definition}

\input{model.tpx}

\newpage
\vspace{2mm}
\noindent
{\bf PIR System.}
In a $(t,k)$-PIR system (Figure \ref{fig:pir}), the user   $\cal U$ starts the execution of the protocol  by invoking
 ${\cal Q}(k, n, i)$ to pick a random 
$k$-tuple of queries $(q_1, \ldots, q_k)$ along with an auxiliary information string $\sf aux$, and then 
sending   each query $q_j$ to the server ${\cal S}_j$.
 Subsequently, each server  ${\cal S}_j$ invokes the answering algorithm 
 ${\cal A}(k, j, \bm x, q_j)$ to compute an answer $a_j$ to the user. 
 Finally, $\cal U$ reconstructs  $x_i$ by executing  the reconstructing  algorithm
${\cal C}(k, n, a_1, \ldots, a_k, {\sf aux})$.

\vspace{2mm}
\noindent
{\bf Communication Complexity.}
The {\em communication complexity} of a PIR protocol ${\cal P}$, denoted by ${\bf C}_{\cal P}(n, k)$, is a function
of $k$ and $n$ that measures  the total number of bits communicated between the user
and $k$ servers, maximized over all choices of the database $\bm x=(x_1,\ldots, x_n) \in \{0, 1\}^n$, the retrieval index $i\in [n]$, and the random coins of the querying algorithm $\cal Q$.

\subsection{Orthogonal Arrays}

 Orthogonal arrays (OAs) \cite{HSS99}  have played  prominent roles 
 in the design of experiments and  found many applications in  computer science.
In this review, we shall use OAs to give a unified framework for 
IT-PIR. 
For any integers $N,k,t>0$, we use the term {\em ``$N\times k$ array"}
to refer to a matrix $\bm Q$ with $N$ rows and $k$ columns, and
use the term {\em ``$N\times t$ subarray"} to refer to a  submatrix of 
$\bm Q$ that consists of   $t$ columns of $\bm Q$, where $t\leq k$.   

  \begin{definition}
 [Orthogonal Array] Let $N,k,s,t>0$ be integers.  Let $\mathbb{S}$ be a set of $s$ symbols or levels. An
  $N\times k$ array $\bm Q$ is said to be an {\em orthogonal array (OA)} of  
 {\em level} $s$, {\em strength} $t$, and {\em index} $\lambda$, or ${\sf OA}(N,k,s,t)$ for short, 
  if every $N\times t$ subarray of $\bm Q$ contains every element of 
  $\mathbb{S}^t$ exactly $\lambda$ times as a row. 
\end{definition}

\begin{example}
The following $8\times 4$ array is an  ${\sf OA}(8,4,2,3)$  with index 1 (where $\mathbb{S}=\{0,1\}$): 
$$
  \begin{array}{cccc}
    0 & 0 & 0 & 0 \\
    0 & 0 & 1 & 1 \\
    0 & 1 & 0 & 1 \\
    0 & 1 & 1 & 0 \\
    1 & 0 & 0 & 1 \\
    1 & 0 & 1 & 0 \\
    1 & 1 & 0 & 0 \\
    1 & 1 & 1 & 1 \\
  \end{array}
$$ 
\end{example}

\section{A   Framework based on Families of  Orthogonal Arrays}
\label{sec:frwk}

In this section, we propose a unified framework that  captures several of the most influential  constructions of
IT-PIR protocols 
\cite{CGKS95,IK99,BI01,WY05,Yek07,Efr09,CFLWZ13,DG15,GKS25} during the past 30 years.

\subsection{Families of Orthogonal Arrays with Span Capability}

Our framework is based on a new discrete structure called 
family of orthogonal arrays with  span capability (FOASC),
 which is a set of OAs that satisfy special algebraic properties.  
\begin{definition}
[Families of Orthogonal Arrays with Span Capability] Let $N,k,s,t,n>0$ be   integers. 
Let $\mathbb{S}$ be a set of $s$  levels and let
$\mathbb{R}$ be a commutative ring with identity. 
Let $\bm \alpha=(\alpha_1,\ldots,\alpha_n)$ be $n$ functions with domain $\mathbb{S}$ and range   $\mathbb{R}$. 
We say that a set $\{\bm Q^{(1)},\ldots, \bm Q^{(n)}\}$ of  ${\sf OA}(N,k,s,t)$'s 
is a {\em family of   orthogonal arrays with $\bm \alpha$-span capability}, or 
${\sf FOASC}(N,k,s,t;\bm \alpha)$ for short, 
if for all $i\in[n]$ and   $\ell\in [N]$, the columns of the following matrix   
$$
{\bm \alpha}(\bm Q^{(i)}_\ell)=
\left(
  \begin{array}{cccc}
\alpha_1(Q^{(i)}_{\ell,1}) & \alpha_1(Q^{(i)}_{\ell,2}) & \cdots & \alpha_1(Q^{(i)}_{\ell,k}) \\
\alpha_2(Q^{(i)}_{\ell,1}) & \alpha_2(Q^{(i)}_{\ell,2}) & \cdots & \alpha_2(Q^{(i)}_{\ell,k}) \\
\vdots & \vdots & \cdots & \vdots \\
 \alpha_n(Q^{(i)}_{\ell,1}) & \alpha_n(Q^{(i)}_{\ell,2}) & \cdots & \alpha_n(Q^{(i)}_{\ell,k}) \\
  \end{array}
\right)
$$
span a nonzero multiple of  $\bm e^{(i)}_n$,
where $\bm Q^{(i)}_\ell$ stands for the $\ell$th row of $\bm Q^{(i)}$. 
\end{definition}

\begin{example}
The following OAs $\bm Q^{(1)}, \bm Q^{(2)}$ form an  ${\sf FOASC}(9,2,9,1;\bm \alpha)$,
where $\bm \alpha=(\alpha_1,\alpha_2)$ and $\alpha_1,\alpha_2$ are functions with domain $\mathbb{S}=\mathbb{F}_3^2$ and range $\mathbb{R}=\mathbb{F}_3$ such that 
$\alpha_1(a,b)=a, \alpha_2(a,b)=b$.
$$
\underbrace{\begin{array}{cc}
 \text{\bf col 1}  & \text{\bf col 2} \\
(1, 0) & (1, 0)  \\    
(1, 1) & (1, 2)  \\    
(1, 2) & (1, 1)  \\    
(2, 0) & (0, 0)  \\    
(2, 1) & (0, 2)  \\    
(2, 2) & (0, 1)  \\    
(0, 0) & (2, 0)  \\    
(0, 1) & (2, 2)  \\    
(0, 2) & (2, 1)        
\end{array}}_{\bm Q^{(1)}}
\hspace{2cm}
\underbrace{\begin{array}{cc}
 \text{\bf col 1}  & \text{\bf col 2} \\
(0, 1) & (0 , 1)  \\
(0, 2) & (0 , 0)  \\
(0, 0) & (0 , 2)  \\
(1, 1) & (2 , 1)  \\
(1, 2) & (2 , 0)  \\
(1, 0) & (2 , 2)  \\
(2, 1) & (1 , 1)  \\
(2, 2) & (1 , 0)  \\
(2, 0) & (1 , 2)  
\end{array}}_{\bm Q^{(2)}}
$$
In fact, there is a vector  $ \bm \lambda=(2,2)^\top$ such that $\bm \alpha(\bm Q^{(i)}_\ell)\cdot  \bm \lambda=
{\bm e}_2^{(i)}$ for all $i\in[2]$ and $\ell\in[9]$.
\end{example}

\subsection{The Framework}

In this section, we show a unified framework  (see Figure \ref{fig:frwk}) 
for constructing  $(t,k)$-PIR protocols from 
FOASC.  
Given an ${\sf FOASC}(N,k,s,t;  \bm \alpha)$ that consists of $n$ ${\sf OA}(N,k,s,t)$'s $\bm Q^{(1)},\ldots,\bm Q^{(n)}$, the main idea underlying 
our framework   is as follows: 
{\em   interpret  the database $\bm x\in \{0,1\}^n$ as a vector in
$\mathbb{R}^n$, encode the database $\bm x$ as a function $F_{\bm x}:\mathbb{S}\rightarrow \mathbb{R}$, which is essentially   a
 linear combination of the $n$ functions 
$\alpha_1,\ldots,\alpha_n$, i.e., 
\begin{align}\label{eq:Fx}
F_{\bm x}(\bm z)=\sum_{\tau=1}^n x_\tau\cdot \alpha_\tau (\bm z),
\end{align}
and finally reduce the problem of retrieving  $x_i$ to that  of 
evaluating   $F_{\bm x}$ on a random row of  ${\bm Q}^{(i)}$.}
\begin{figure}[H]
\begin{center}
\begin{boxedminipage}{16cm}

\vspace{1mm}
{\bf The underlying public parameters and structures:} 
\vspace{2mm}

\begin{itemize}
  \item $\bm Q^{(1)},\ldots,\bm Q^{(n)}$: an ${\sf FOASC}(N,k,s,t;\bm \alpha)$, where
 $\bm \alpha$ consists of   $n$ functions  $\alpha_1,\ldots,\alpha_n:\mathbb{S}\rightarrow \mathbb{R}$ from a set $\mathbb{S}$ of $s$ levels
 to a commutative ring $\mathbb{R}$ with identity. 
  \item $F_{\bm x}$:  a function representing  the database $\bm x$, based on
   the ${\sf FOASC}(N,k,s,t;\bm \alpha)$ (Eq. \eqref{eq:Fx}). 
 \item $\{\bm \lambda^{(i)}_\ell,\omega^{(i)}_\ell\}$:  a vector  $\bm\lambda^{(i)}_\ell=
(\lambda^{(i)}_{\ell,1},\ldots,
\lambda^{(i)}_{\ell,k})^\top\in
\mathbb{R}^k$ and a nonzero ring element $\omega^{(i)}_\ell\in \mathbb{R}$  such that
  $\bm \alpha(\bm Q^{(i)}_\ell)\cdot  \bm \lambda^{(i)}_\ell=
\omega^{(i)}_\ell {\bm e}_n^{(i)}$ for all  $i\in[n]$ and $\ell\in[N]$.
  \end{itemize}

\vspace{3mm}
{\bf The private information retrieval protocol ${\cal P}={\cal (Q,A,C)}$:}
\begin{itemize}
\item ${\cal Q}(k,n,i)$: Choose $\ell\in[N]$ uniformly. Output 
$(q_1,\ldots,q_k)=\big(Q^{(i)}_{\ell,1},\ldots,Q^{(i)}_{\ell,k}\big)$
and ${\sf aux}=\ell$. 
 
\item ${\cal A}(k,j,\bm x,q_j)$: Output $a_j=F_{\bm x}(q_j)$. 
\item ${\cal C}(k,n,a_1,\ldots,a_k,{\sf aux})$:
Compute $y= \sum_{j=1}^k\lambda^{(i)}_{\ell,j} \cdot a_j$ and output  $1_{y=\omega^{(i)}_\ell}$.  
\end{itemize}
    \end{boxedminipage} 
\end{center}
\vspace{-5mm}
\caption{A unified framework for constructing  $(t,k)$-PIR from ${\sf FOASC}(N,k,s,t;\bm \alpha)$}
\label{fig:frwk}
\end{figure}
 
\begin{theorem}
\label{thm:frwk}
If there is an ${\sf FOASC}(N,k,s,t;\bm \alpha)$, where $\bm \alpha$ are $n$ functions from  
 $\mathbb{S}$ to $\mathbb{R}$, then there is 
  a $(t,k)$-PIR protocol $\cal P$  with communication complexity 
${\bf C}_{\cal P}(n,k)=k(\log|\mathbb{S}|+\log|\mathbb{R}|)$.
\end{theorem}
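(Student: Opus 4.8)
The plan is to verify that the protocol $\cal P$ displayed in Figure \ref{fig:frwk} is a bona fide $(t,k)$-PIR with the stated communication cost. Three things must be checked: correctness of the reconstruction, $t$-privacy, and the bit count. None of these is deep; the whole argument is a direct unwinding of the FOASC definition together with the defining property of orthogonal arrays.

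For correctness, I would fix $\bm x$, a retrieval index $i$, and a choice $\ell\in[N]$ of the coin of $\cal Q$, so that $q_j=Q^{(i)}_{\ell,j}$ and, by Eq.~\eqref{eq:Fx}, $a_j=F_{\bm x}(q_j)=\sum_{\tau=1}^n x_\tau\,\alpha_\tau(Q^{(i)}_{\ell,j})$. Substituting into the reconstruction step and swapping the two summations gives $y=\sum_{j=1}^k\lambda^{(i)}_{\ell,j}a_j=\sum_{\tau=1}^n x_\tau\bigl(\sum_{j=1}^k\lambda^{(i)}_{\ell,j}\,\alpha_\tau(Q^{(i)}_{\ell,j})\bigr)$, and the inner sum is precisely the $\tau$-th coordinate of $\bm\alpha(\bm Q^{(i)}_\ell)\cdot\bm\lambda^{(i)}_\ell$. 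By the property of the $\{\bm\lambda^{(i)}_\ell,\omega^{(i)}_\ell\}$ guaranteed by the FOASC (as recorded among the public parameters), this vector equals $\omega^{(i)}_\ell\,\bm e_n^{(i)}$, so only the $\tau=i$ term survives and $y=x_i\,\omega^{(i)}_\ell$. Since $x_i\in\{0,1\}$ and $\mathbb{R}$ has an identity, $y$ is either $\omega^{(i)}_\ell$ (if $x_i=1$) or $0$ (if $x_i=0$); as $\omega^{(i)}_\ell\neq 0$, these are distinct, hence $1_{y=\omega^{(i)}_\ell}=x_i$ and $\cal C$ returns the correct entry.

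For $t$-privacy, fix any $T\subseteq[k]$ with $|T|\le t$ and any index $i$. Since $\ell$ is uniform in $[N]$, the joint query $(q_j)_{j\in T}$ is a uniformly random row of the $N\times|T|$ subarray of $\bm Q^{(i)}$ formed by the columns indexed by $T$. Because $\bm Q^{(i)}$ is an ${\sf OA}(N,k,s,t)$, every $t$-subarray contains each element of $\mathbb{S}^t$ the same number of times, and summing over the remaining $t-|T|$ coordinates shows that every $|T|$-subarray contains each element of $\mathbb{S}^{|T|}$ the same number of times; hence $(q_j)_{j\in T}$ is uniform on $\mathbb{S}^{|T|}$ \emph{independently of $i$}. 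Therefore ${\cal Q}_T(k,n,i_1)\xlongequal{\rm id}{\cal Q}_T(k,n,i_2)$ for all $i_1,i_2\in[n]$, which is exactly the privacy condition; note that ${\sf aux}=\ell$ is retained by the user and never transmitted to any server, so it plays no role here.

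Finally, the communication consists of $k$ queries, each an element of $\mathbb{S}$ costing $\log|\mathbb{S}|$ bits, and $k$ answers, each an element of $\mathbb{R}$ costing $\log|\mathbb{R}|$ bits, so ${\bf C}_{\cal P}(n,k)=k(\log|\mathbb{S}|+\log|\mathbb{R}|)$. I do not anticipate a real obstacle: the only points needing a little care are the passage from strength $t$ to strength $|T|$ in the privacy step — a standard fact about orthogonal arrays — and, in the correctness step, confirming that the indicator test behaves correctly over an arbitrary commutative ring, which it does precisely because $x_i$ is a bit and $\omega^{(i)}_\ell$ is required to be nonzero, ruling out any cancellation or zero-divisor pathology.
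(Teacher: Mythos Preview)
Your proposal is correct and follows essentially the same approach as the paper's own proof: compute $y$ by substituting \eqref{eq:Fx}, swap the two sums, invoke the span condition to obtain $y=\omega^{(i)}_\ell x_i$, read off privacy from the orthogonal-array property, and count bits. Your write-up is in fact slightly more careful than the paper's in two minor respects---you explicitly treat coalitions of size $|T|<t$ (the paper only argues the case $|T|=t$) and you spell out why the indicator $1_{y=\omega^{(i)}_\ell}$ behaves correctly over an arbitrary commutative ring---but these are refinements of the same argument, not a different route.
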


\begin{proof}
It suffices to show that the protocol  $\cal P$ defined by Figure \ref{fig:frwk} is a 
$(t,k)$-PIR with the claimed communication complexity.
For the   correctness of $\cal P$, we have that
\begin{align*}
y&=\sum_{j=1}^k \lambda^{(i)}_{\ell,j} \cdot a_j\\
&=\sum_{j=1}^k \lambda^{(i)}_{\ell,j} \cdot F_{\bm x}(q_j)\\
&= \sum_{j=1}^k \lambda^{(i)}_{\ell,j} \cdot F_{\bm x}(Q^{(i)}_{\ell,j})\\
&=\sum_{j=1}^k \lambda^{(i)}_{\ell,j} \cdot \left(\sum_{\tau=1}^n x_\tau \cdot \alpha_\tau 
\left(Q^{(i)}_{\ell,j}\right)\right)\\
&= \sum_{\tau=1}^n x_\tau \cdot \left(\sum_{j=1}^k \lambda^{(i)}_{\ell,j} \cdot \alpha_\tau 
\left(Q^{(i)}_{\ell,j}\right)\right)\\
&=\bm x\cdot \bm \alpha(\bm Q^{(i)}_{\ell})\cdot \bm \lambda^{(i)}_\ell\\
&=\bm x\cdot   \omega^{(i)}_{\ell} \bm e_n^{(i)}\\
&= \omega^{(i)}_{\ell} x_i.
\end{align*}
Clearly, we have that $1_{y=\omega^{(i)}_\ell}=x_i$ and thus the protocol is correct.

For $t$-privacy, we consider the collusion of any $t$ servers, say 
${\cal S}_{j_1},\ldots,{\cal S}_{j_t}$, and let $T=\{j_1,\ldots,j_t\}$.
As per the querying algorithm $\cal Q$ in  Figure \ref{fig:frwk}, for any $i_1,i_2\in [n]$,
${\cal Q}_T(k,n,i_1)$ (resp. ${\cal Q}_T(k,n,i_2)$) is a random row of the 
$N\times t$ subarray of $\bm Q^{(i_1)}$ (resp. $\bm Q^{(i_2)}$ ) that 
consists of the columns indexed by $T$.
Since $\bm Q^{(i_1)}$ and $\bm Q^{(i_2)}$ are ${\sf OA}(N,k,s,t)$'s,  
${\cal Q}_T(k,n,i_1)$ and ${\cal Q}_T(k,n,i_2)$ are  both  uniformly distributed over 
$\mathbb{S}^t$. Hence, 
 ${\cal Q}_T (k, n, i_1)\xlongequal{\rm id} {\cal Q}_T (k, n, i_2),$ 
i.e., the protocol $\cal P$ is $t$-private.

In our framework, the client sends a query 
$q_j\in \mathbb{S}$ to every server ${\cal S}_j$ and the server  returns an answer 
$a_j\in \mathbb{R}$ to the client.
Thus, the communication complexity   is 
${\bf C}_{\cal P}(n,k)=k(\log|\mathbb{S}|+\log|\mathbb{R}|),$
where $\log|\mathbb{S}|$ (resp. $\log|\mathbb{R}|$) is   the bit length  of every element of
$\mathbb{S}$ (resp. $\mathbb{R}$).
\end{proof}
 
\noindent
{\bf Remark.} Except   Dvir and Gopi \cite{DG15},  
 our framework can capture all PIR protocols considered by this review  with
$\omega^{(i)}_\ell=1$ and thus the reconstructing algorithm can simply output 
$y$. 

\section{PIR Constructions within the Proposed Framework}
\label{sec:pir}

In this section, we show how  several of the most influential   constructions
\cite{CGKS95,WY05,Yek07,Rag07,Efr09,DG15,GKS25} of $(t,k)$-PIR protocols
are captured by the proposed framework, which may   inspire 
new constructions with lower communication complexity. 

\subsection{Protocols based on Covering Codes}
\label{sec:cc}

Chor,   Goldreich,   Kushilevitz, and Sudan \cite{CGKS95} proposed a 
$(1,2)$-PIR  with communication complexity $O(n^{1/3})$ in 1995,
which had been the most influential (1,2)-PIR 
for almost 20 years.

To better understand their protocol,   we   identify every integer $i\in[n]$ with  a tuple $(i_1,i_2,i_3)\in [n^{1/3}]^3$, which   can be done by sorting the tuples in
$[n^{1/3}]^3$   alphabetically and identifying every  $i\in[n]$ with the $i$th tuple. Underlying 
\cite{CGKS95} is a $(1,8)$-PIR   with  communication complexity $O(n^{1/3})$, where
the $n$ bits of the database $\bm x=(x_1,\ldots, x_n)\in \{0,1\}^n$ are organized as a  
 hypercube  of side length $h=n^{1/3}$ (for ease of exposition, assume that 
$n$ is a cubic number), every bit $x_i$ is 
located at a position $(i_1,i_2,i_3)\in [h]^3$ of the hypercube, and
 the 8 servers are named as
${\cal S}_{000},\ldots,{\cal S}_{111}$. 
The $(1,2)$-PIR  is obtained from the  $(1,8)$-PIR  by asking  
${\cal S}_{000}$ to simulate half of the  servers, i.e., 
${\cal S}_{000},{\cal S}_{100},{\cal S}_{010},{\cal S}_{001}$, and
asking ${\cal S}_{111}$ to simulate  the remaining  servers, i.e., 
${\cal S}_{111},{\cal S}_{011},{\cal S}_{101}, {\cal S}_{110}$.
Specifically, the simulation strategy is based on a {\em covering code} with radius 1 for
$\{0,1\}^3$.

\vspace{2mm}
\noindent
{\bf The FOASC and database representation.}
Let  ${\cal H}=\{H_1,\ldots,H_\zeta\}$ be the power set   of  
  $[h]$, where  $\zeta=2^h$.
Denote by $A\oplus B=(A\setminus B)\cup (B\setminus A)$   the symmetric difference of any two sets
$A,B$. 
Within our framework,  underlying the $(1,2)$-PIR   of \cite{CGKS95}  
 is  an ${\sf FOASC}(N,k,s,t;\bm \alpha)$ that consists of $n$ ${\sf OA}(N,k,s,t)$'s $\bm Q^{(1)},\ldots,\bm Q^{(n)}$, where
$N=\zeta^3, k=2, s=\zeta^3, t=1, $ and 
\begin{align}\label{eqn:foacc}
Q^{(i)}_{\ell,1}=(H_{\ell_1},H_{\ell_2},H_{\ell_3});\hspace{2mm}Q^{(i)}_{\ell,2}=(H_{\ell_1}\oplus \{i_1\},
H_{\ell_2}\oplus\{i_2\},H_{\ell_3}\oplus \{i_3\})
\end{align}
for all $i=(i_1,i_2,i_3)\in [h]^3$ and $\ell=(\ell_1,\ell_2,\ell_3) \in[\zeta]^3$. 
 The function  
$F_{\bm x}(\bm z)$ (Eq. (\ref{eq:Fx})) has domain $\mathbb{S}={\cal H}^3$ and range $\mathbb{R}=(\mathbb{F}_2)^{3h+1}$, and 
for all $\tau\in [n]$ and $\bm z=(U,V,W)\in \mathbb{S}$,
\begin{equation}
\begin{split}
\alpha_{\tau}(\bm z)= 1_{\tau\in U\times V\times W}\big\| \big(1_{\tau\in (U\oplus\{c\})\times
  V\times W}\big)_{c\in [h]}
 &\big\| 
\big(1_{\tau\in  U \times (V\oplus\{c\})\times  W}\big)_{c\in [h]} \\
&\big\|\big(1_{\tau\in U\times V\times (W\oplus\{c\})}\big)_{c\in [h]}.
\end{split}
\end{equation}

\vspace{2mm}
\noindent
{\bf The reconstruction coefficients.}
To see that the   FOASC \eqref{eqn:foacc} gives a $(1,2)$-PIR, it suffices to 
note that for all $i=(i_1,i_2,i_3)\in [h]^3$ and $\ell=(\ell_1,\ell_2,\ell_3) \in[\zeta]^3$, 
there is a vector 
\begin{align}
\bm \lambda^{(i)}_\ell=1\big\|\bm e_h^{(i_1)}\big\|\bm e_h^{(i_2)}\big\|\bm e_h^{(i_3)}\big\|1\big\|
\bm e_h^{(i_1)}\big\|\bm e_h^{(i_2)}\big\|\bm e_h^{(i_3)}
\end{align}
such that   $\bm \alpha(\bm Q^{(i)}_\ell)\cdot \bm \lambda^{(i)}_\ell=
\bm e_n^{(i)}$.   
By Theorem \ref{thm:frwk},  the communication complexity of the protocol is
 $${\bf C}(n,k)=2(\log |\mathbb{S}|+\log |\mathbb{R}|)=2(3h+3h+1)=12h+2=O(n^{1/3}).$$

\subsection{Protocols based on Polynomial Interpolations}
\label{sec:pi}

\subsubsection{Lagrange Interpolations}
\label{sec:li}

Chor,   Goldreich,   Kushilevitz, and Sudan \cite{CGKS95} proposed a  Lagrange interpolation based 
$(t,k)$-PIR with communication complexity $O(n^{1/\lfloor (k-1)/t \rfloor})$ in 1995,
which introduced the polynomial interpolation techniques 
to the realm of PIR and 
  inspired many  subsequent  constructions.

\vspace{2mm}
\noindent
{\bf The FOASC and database representation.}
Let $d=\lfloor (k-1)/t \rfloor $ and let $h$ be the least integer such that ${h\choose d}\geq n$. 
Then there exist $n$  vectors ${\bm u}_1,\ldots,{\bm u}_n\in \{0,1\}^h\subseteq \mathbb{F}_p^h$ of Hamming weight $d$, where  $p>k$ is a prime.
Let  ${\bm R}_1,\ldots,{\bm R}_N$ be the $h\times t$ matrices over $\mathbb{F}_p$, where
$N=p^{ht}$.   For 
every  $i\in[n]$ and $\ell\in [N]$, the user's retrieval index $i$ is hidden with a degree $t$  curve
$${\bm q}^{(i)}_\ell(\theta)={\bm u}_i+{\bm R}_\ell \cdot (\theta,\theta^2,\ldots,\theta^t)^\top
.$$
Within our framework, underlying their
  $(t,k)$-PIR is an ${\sf FOASC}(N,k,s,t;\bm \alpha)$ that consists
   of $n$ ${\sf OA}(N,k,s,t)$'s $\bm Q^{(1)},\ldots,\bm Q^{(n)}$, where $N=p^{ht},s=p^h$, and
\begin{align}\label{eqn:foali}
Q^{(i)}_{\ell,j}={\bm q}_\ell^{(i)}(j)
\end{align}
for  all $i\in[n]$, $\ell\in[N]$ and $j\in[k]$. 
 The function  
$F_{\bm x}(\bm z)$ (Eq. (\ref{eq:Fx})) has domain  $\mathbb{S}=\mathbb{F}_p^h$ and range
$\mathbb{R}=\mathbb{F}_p$, where    for all $\tau\in[n]$ and $\bm z\in \mathbb{S}$,
\begin{align}
\alpha_\tau(\bm z)={\bm z}^{\bm u_\tau}.
\end{align}

\vspace{2mm}
\noindent
{\bf The reconstruction coefficients.}
To see  the FOASC  \eqref{eqn:foali} gives a $(t,k)$-PIR, it suffices to note that 
for all $i\in[n]$ and $\ell\in[N]$, there is a vector 
\begin{align}
\bm \lambda_\ell^{(i)}=\left(\prod_{j\in [k]\setminus\{1\}} \frac{j}{j-1},\prod_{j\in [k]\setminus\{2\}} 
\frac{j}{j-2},\
\ldots,\prod_{j\in [k]\setminus\{k\}} \frac{j}{j-k}\right)^\top
\end{align} 
such that   $\bm \alpha(\bm Q^{(i)}_\ell)\cdot \bm \lambda^{(i)}_\ell=
\bm e_n^{(i)}$. 
Specifically, the entries of  $\bm \lambda_\ell^{(i)}$ are 
$k$   coefficients for   Lagrange interpolation.
By Theorem \ref{thm:frwk},  the communication complexity of the protocol is
$${\bf C}(n,k)=k(\log |\mathbb{S}|+\log |\mathbb{R}|)=k(h\log p+\log p)=O(n^{1/\lfloor (k-1)/t \rfloor}).$$

\subsubsection{Hermite Interpolations}
\label{sec:hi}

Woodruff and Yekhanin \cite{WY05} proposed a  Hermite interpolation based 
$(t,k)$-PIR  with communication complexity $O(n^{1/\lfloor (2k-1)/t \rfloor})$ in 2005,
which refined the Lagrange  interpolation techniques of  \cite{CGKS95}
and 
has been the most influential $(t,k)$-PIR  for  $t>1$
during the past 20 years.

\vspace{2mm}
\noindent
{\bf The FOASC and database representation.}
Let $d=\lfloor (2k-1)/t \rfloor $ and let $h$ be the least integer such that ${h\choose d}\geq n$. 
Then there exist $n$  vectors ${\bm u}_1,\ldots,{\bm u}_n\in \{0,1\}^h\subseteq \mathbb{F}_p^h$ of Hamming weight $d$, where  $p>k$ is a prime.  
 Let ${\bm R}_1,\ldots,{\bm R}_N$ be the    $h\times t$ matrices over $\mathbb{F}_p$, where
$N=p^{ht}$.  
For 
every  $i\in[n]$ and $\ell\in [N]$, the user's retrieval index $i$ is hidden with a degree
 $t$   curve 
$${\bm q}^{(i)}_\ell(\theta)={\bm u}_i+{\bm R}_\ell \cdot (\theta,\theta^2,\ldots,\theta^t)^\top
.$$
Within our framework, underlying their
  $(t,k)$-PIR is an ${\sf FOASC}(N,k,s,t;\bm \alpha)$ that consists of $n$ ${\sf OA}(N,k,s,t)$'s
  $\bm Q^{(1)},\ldots,\bm Q^{(n)}$, where $N=p^{ht},s=p^h$, and
\begin{align}\label{eqn:foawy}
Q^{(i)}_{\ell,j}={\bm q}_\ell^{(i)}(j)
\end{align}
for  all $i\in[n]$, $\ell\in[N]$ and $j\in[k]$. 
The function  
$F_{\bm x}(\bm z)$ (Eq. (\ref{eq:Fx})) has domain      $\mathbb{S}=\mathbb{F}_p^h$ and range 
$\mathbb{R}=(\mathbb{F}_p)^{h+1}$,  where   for all $\tau\in[n]$ and $\bm z\in \mathbb{S}$,
\begin{align}\alpha_\tau(\bm z)=\left(
{\bm z}^{\bm u_\tau},\frac{\partial ({\bm z}^{\bm u_\tau})}{\partial z_1},\ldots,\frac{\partial
({\bm z}^{\bm u_\tau})}
{\partial z_h}\right).
\end{align}

\vspace{2mm}
\noindent
{\bf Hermite interpolation basics.}
Based on an observation from \cite{WY05},  for any $k$ distinct  nonzero field elements 
 $\theta_1,\theta_2,\ldots,\theta_k\in \mathbb{F}_p^*$, the $(2k)\times (2k)$ matrix
 $$\bm M_{\theta_1,\theta_2,\ldots,\theta_k}=\left(
       \begin{array}{ccccc}
         1 & \theta_1 & \theta_1^2 & \cdots & \theta_1^{2k-1} \\
                  0 & 1 & 2\theta_1 & \cdots & (2k-1)\theta_1^{2k-2} \\
         1 & \theta_2^1 & \theta_2^2 & \cdots & \theta_2^{2k-1} \\
                  0 & 1 & 2\theta_2 & \cdots & (2k-1)\theta_2^{2k-2} \\
         \vdots & \vdots & \vdots & \cdots & \vdots \\
         1 & \theta_k^1 & \theta_k^2 & \cdots  & \theta_k^{2k-1} \\
          0 & 1 & 2\theta_k & \cdots & (2k-1)\theta_k^{2k-2} \\
       \end{array}
     \right)^\top
 $$
 is nonsingular. Specifically, $\bm M_{ 1,2,\ldots,k}$ is nonsingular and thus
 there is a vector $\bm \mu\in \mathbb{F}_p^{2k}$ such that
 $$ \bm M_{ 1,2,\ldots,k}  \cdot \bm \mu={\bm e}_{2k}^{(1)}.$$
For any degree $<2k$ univariate polynomial $\varphi(\theta)=\varphi_0+\varphi_1\theta+\cdots+
\varphi_{2k-1}\theta^{2k-1}$, we note that 
$$
\big(\varphi(1),\varphi'(1),\ldots,\varphi(k),\varphi'(k)\big)=\big(\varphi_0,\varphi_1,\ldots,\varphi_{2k-1}\big)\cdot 
\bm M_{ 1,2,\ldots,k}.
$$
Therefore, 
one can easily recover   $\varphi_0=\varphi(0)$ from
   $\{\varphi(j), \varphi'(j)\}_{j=1}^k$ as follows
$$\varphi_0=\big(\varphi(1),\varphi'(1),\ldots,\varphi(k),\varphi'(k)\big) 
\cdot \bm \mu.$$

\vspace{2mm}
\noindent
{\bf The reconstruction coefficients.}
For every $\tau\in[n]$,  
 $\phi_\tau(\theta)=({\bm q}_\ell^{(i)}(\theta))^{\bm u_\tau}$ is a polynomial of degree $<2k$. Clearly, there is
a $k(h+1)\times (2k)$ matrix $\bm T_\ell$ that only depends on $\bm R_\ell$   such that  
$$
\big(\phi_\tau(1),\phi_\tau'(1),\ldots,\phi_\tau(k),\phi_\tau'(k)\big) 
= \alpha_\tau\big({\bm Q}_\ell^{(i)}\big) \cdot \bm T_\ell.
$$
Note that $\phi_\tau(0)=({\bm q}_\ell^{(i)}(0))^{\bm u_\tau}=(\bm u_i)^{\bm u_\tau}=1_{\tau=i}$.
Therefore,  for all $i\in[n]$ and $\ell\in[N]$,    we have  
\begin{align}
\bm e_{n}^{(i)}=\big( \phi_1(0), \phi_2(0),\ldots, \phi_n(0)\big)^\top=
\bm \alpha\big({\bm Q}_\ell^{(i)}\big)\cdot \underbrace{\bm 
T_\ell \cdot \bm \mu}_{\bm \lambda^{(i)}_\ell}.
\end{align}
Hence, the    FOASC \eqref{eqn:foawy} gives a $(t,k)$-PIR. 
By Theorem \ref{thm:frwk}, the communication complexity of the protocol is
$${\bf C}(n,k)=k(\log |\mathbb{S}|+\log |\mathbb{R}|)=k(h\log p+(h+1)\log p)=O(n^{1/\lfloor (2k-1)/t \rfloor}).$$

\subsection{Protocols based on Matching Vectors over Finite Fields}
Yekhanin \cite{Yek07} proposed a  
$(1,3)$-PIR  with communication complexity $O(n^{1/r})$
for any integer  $r$ such that $p=2^r-1$ is a Mersenne prime  in 2007.
Assuming that   there are infinitely many Mersenne 
primes, his construction gives a $(1,3)$-PIR  with communication complexity
$O(n^{1/\log\log n})$, which is the {\em first} PIR protocol that uses a {\em constant} number of servers 
and achieves a {\em subpolynomial} communication complexity. 
While the  protocols   in Section \ref{sec:cc} and \ref{sec:pi}  are among
the {\em first} generation of PIR, Yekhanin's  construction 
\cite{Yek07} is best known for   initiating the constructions of the {\em third} generation of 
PIR\footnotemark.\footnotetext{The {\em second} generation of PIR consists of \cite{BIKR02} 
and attracts  limited attention in the realm of PIR.}

\subsubsection{Yekhanin's Construction}
\label{sec:yek1}

The core building block underlying Yekhanin's PIR  is a subset of $\mathbb{F}_p^*$
that is both combinatorially nice and algebraically nice. 
A set $S\subseteq \mathbb{F}_p^*$ is {\em $(h,n)$-combinatorially nice} if there exist two 
sets   $\{{\bm u}_1,\ldots, {\bm u}_n\}, \{
{\bm v}_1,\ldots,{\bm v}_n\}\subseteq \mathbb{F}_p^h$ of vectors such that 
\begin{itemize}
\item
  $\langle{\bm u}_i,{\bm v}_i \rangle=0$ for all $i\in[n]$; and  
  \item $\langle {\bm u}_i,{\bm v}_j  \rangle\in S$ for all $i,j\in[n]$ such that $i\neq j$.
  \end{itemize}
  The two  sets of vectors are said to  form an 
  {\em $S$-matching family} in $\mathbb{F}_p^h$.  
Yekhanin \cite{Yek07} showed that for any Mersenne prime 
 $p=2^r-1$ and any integer $d\geq p-1$, the subgroup 
\begin{align*}  
S=\langle 2 \rangle=\{1,2,\ldots,2^{r-1}\}
\end{align*}
 of  $\mathbb{F}_p^*$ is  $(h,n)$-combinatorially nice for
 $h={d-1+(p-1)/r\choose (p-1)/r}$ and $n={d\choose p-1}.$
Specifically, if we denote by ${\bf 1}_h$ the all-one vector of length $h$,
then the   ${\bm u}_1,\ldots,{\bm u}_n$ constructed by
  \cite{Yek07} satisfy   $\langle {\bm u}_i, {\bf 1}_h \rangle \neq 0$ for all
 $i\in[n]$. 
A set $S\subseteq \mathbb{F}_p^*$ is {\em $k$-algebraically  nice} if there exist two
sets $S_0,S_1\subseteq \mathbb{F}_p$ such that 
\begin{itemize}
\item
  $|S_0|>0, |S_1|=k$, and  
\item  $ |S_0\cap (\sigma+\delta S_1)|\equiv 0 \pmod 2$   for all  $\sigma\in \mathbb{F}_p$
and $\delta\in S$.
\end{itemize}
 Yekhanin \cite{Yek07} showed that 
for any Mersenne prime $p=2^r-1$,  
the set  $S=\langle 2\rangle$  
  is  $k$-algebraically  nice for $k=3$. 
 In particular, if  $g$ is a generator of 
 $\mathbb{F}_{2^r}^*$ and   $\gamma\in \mathbb{F}_p$ is
 an integer such that $1+g+g^\gamma=0$, then  one can choose
 \begin{align*}
 S_1=\{0,1,\gamma\}.
 \end{align*} 
 Furthermore, if 
 $L$ is the linear subspace of $\mathbb{F}_2^p$ that consists of the incidence  vectors of
 the sets  $\{\sigma+\delta S_1\}_{\sigma\in \mathbb{F}_p, \delta\in S}$, 
 then $S_0$ can be any nonempty subset of $\mathbb{F}_p$ whose indicator vector 
belongs to $L^\perp$, the dual space of $L$.  
 From now on, we denote  
 $d_1=0,d_2=1,$ and $d_3=\gamma$.

 \vspace{2mm}
\noindent
{\bf The FOASC and database representation.}
Let      $\mathbb{F}_p^h=\{{\bm w}_1,\ldots,{\bm w}_N\}$, where
$N=p^{h}$. 
Within our framework, underlying Yekhanin's
  $(1,3)$-PIR is an ${\sf FOASC}(N,k,s,t;\bm \alpha)$ that consists of
   $n$ ${\sf OA}(N,k,s,t)$'s $\bm Q^{(1)},\ldots,\bm Q^{(n)}$, where $N=p^h,k=3,s=p^h,t=1$, and
\begin{align}  \label{eqn:foayek}  
Q^{(i)}_{\ell,j}={\bm w}_\ell+d_j \cdot {\bm v}_i
\end{align}
for  all $i\in[n]$,  $\ell\in [N]$ and $j \in[k]$. 
The function  
$F_{\bm x}(\bm z)$ (Eq. (\ref{eq:Fx})) has domain     $\mathbb{S}=\mathbb{F}_p^h$ and range 
$\mathbb{R}= \mathbb{F}_2^{p}$, where for all $\tau\in[n]$ and $\bm z\in \mathbb{S}$
\begin{align}
\alpha_\tau({\bm z})=\Big(1_{\langle {\bm u}_\tau, {\bm z}+\rho\cdot 
{\bf 1}_h\rangle\in S_0}\Big)_{\rho\in \mathbb{F}_p}.
\end{align} 
 
\vspace{2mm}
\noindent
{\bf The reconstruction coefficients.}
 Note that $\langle {\bf u}_\tau, {\bf 1}_h \rangle \neq 0$ for all $\tau\in [n]$. 
There is 
a field element $\rho_\ell\in \mathbb{F}_p$ such that 
$\langle  {\bm u}_i, {\bm w}_\ell+\rho_\ell\cdot {\bf 1}_h\rangle\in S_0$. 
To see that the FOASC \eqref{eqn:foayek} gives a $(1,3)$-PIR, it suffices to note that 
for any $i\in[n]$ and $\ell\in[N]$, there is  a  binary vector 
\begin{align}
\bm \lambda_\ell^{(i)}=\Big(0,\ldots,0,\underbrace{1}_{(\rho_\ell+1){\rm st~entry}},0,\ldots,0,
\underbrace{1}_{(p+\rho_\ell+1){\rm st~entry}},0,\ldots,0,
\underbrace{1}_{(2p+\rho_\ell+1){\rm st~entry}},0,\ldots,0\Big)^\top
\end{align} of length $3p$ and   weight   3 (which is the cardinality of $S_1$)
such that   $\bm \alpha(\bm Q^{(i)}_\ell)\cdot \bm\lambda^{(i)}_\ell=
{\bm e}_n^{(i)}$. 
By Theorem \ref{thm:frwk}, the communication complexity of the protocol is
$${\bf C}(n,k)=3(\log |\mathbb{S}|+\log |\mathbb{R}|)=3(h\log p+p)=O(n^{1/r}).$$

\subsubsection{Raghavendra's Interpretation}

Raghavendra \cite{Rag07} presented a more friendly interpretation of Yekhanin's 
$(1,3)$-PIR, which had  inspired  Efremenko \cite{Efr09},
a milestone in  the third generation  PIR. 

\vspace{2mm}
\noindent
{\bf The FOASC and database representation.}
With the same notation  as in Section \ref{sec:yek1}, Raghavendra \cite{Rag07} considered    a  polynomial  
$$P(\theta)=1+\theta+\theta^\gamma\in \mathbb{F}_{2^r}[\theta]$$ such that
$P(g^\delta)=0$ for   $g\in \mathbb{F}_{2^r}^*$ and all $\delta\in S$,  and  $P(1)=1$. He 
 represented the database $\bm x$ as   
a function 
$F_{\bm x}(\bm z)$ (Eq. (\ref{eq:Fx}))
with domain $\mathbb{S}=\mathbb{F}_p^h$  and range
$\mathbb{R}=\mathbb{F}_{2^r}$, where for all $\tau\in[n]$ and $\bm z\in \mathbb{S}$,
 \begin{align}\label{eqn:Fxrag}
\alpha_\tau({\bm z})=g^{\langle {\bm u}_\tau, {\bm z} \rangle}. 
\end{align}

\vspace{2mm}
\noindent
{\bf The reconstruction coefficients.}
 Note that $P(g^{\langle {\bm u}_\tau, {\bm v}_i \rangle})=1_{\tau=i}$.
To see that the new function $F_{\bm x}$ defined by \eqref{eq:Fx}, \eqref{eqn:Fxrag} and the FOASC defined by \eqref{eqn:foayek} 
   give a $(1,3)$-PIR, it suffices to note that 
for any $i\in[n]$ and $\ell\in[N]$, there is a vector 
\begin{align}
\bm \lambda_\ell^{(i)}=g^{-\langle {\bm u}_i,{\bm w}_\ell \rangle}
\cdot\left(1,1,1\right)^\top
\end{align}
such that   $\bm \alpha(\bm Q^{(i)}_\ell)\cdot \bm \lambda^{(i)}_\ell=
{\bm e}_n^{(i)}$. 
By Theorem \ref{thm:frwk},  the communication complexity of the protocol is
$${\bf C}(n,k)=3(\log |\mathbb{S}|+\log |\mathbb{R}|)=3(h\log p+r)=O(n^{1/r}).$$

\subsection{Protocols based on Matching Vectors over Finite Rings}

For any integer $r\geq 2$, Efremenko \cite{Efr09} proposed a  
$(1,2^r)$-PIR with communication complexity ${\cal L}_r(n)=\exp(O((\log n)^{1/r}(\log \log n)^{1-1/r}))$ 
in 2009. Specifically, 
for $r=2$, their construction can be optimized to give   the first $(1,3)$-PIR with subpolynomial
communication complexity,   without making  any    assumptions such as
the infinity of Mersenne primes  \cite{Yek07}. 
In several subsequent works \cite{IS10,CFLWZ13,DG15,GKS25}, the number  of servers required by 
\cite{Efr09} was further  reduced. 

\subsubsection{Efremenko's Construction}
Let $m=p_1p_2\cdots p_r$ be the product of $r$ distinct   primes
$p_1,p_2,\ldots,p_r$ and let $p$ be a prime/prime power such that
$m|(p-1)$. The {\em canonical set} of $m$ is the  set  
$S_m\subseteq\mathbb{Z}_m$   of  $2^r-1$ nonzero integers 
$\delta\in \mathbb{Z}_m$ that satisfy $\delta \bmod p_j\in \{0,1\}$
for all $j\in[r]$. 
Underlying   \cite{Efr09} is an  
{\em $S_m$-matching family} $\{{\bm u}_1,\ldots,{\bm u}_n\},\{{\bm v}_1,\ldots,{\bm v}_n\}
\subseteq \mathbb{Z}_m^h$ of size 
$n=\exp(O((\log h)^r/(\log\log h)^{r-1}))$ such that
  $\langle{\bm  u}_i,{\bm v}_i \rangle=0$ for all $i\in[n]$; and  
  $\langle {\bm u}_i,{\bm v}_j  \rangle\in S_m$ for all   $i\neq j$.
Such  families can be  obtained from 
Gromulsz's set systems \cite{Gro00}. 
Another ingredient of \cite{Efr09} is an {\em $S_m$-decoding polynomial}
$$P(\theta)=\rho_1\theta^{d_1}+\cdots+\rho_{k} \theta^{d_{k}}\in \mathbb{F}_p[\theta],$$
such that $P(g^\delta)=0$ for all $\delta\in S_m$ and $P(1)=1$, where $g\in\mathbb{F}_p^*$ is of order $m$. 
A trivial construction of $P(\theta)$, i.e., $P(\theta)=\prod_{\delta\in S_m} (\theta-g^\delta)/
\prod_{\delta\in S_m} (1-g^\delta)$, requires $k=2^r$. 

\vspace{2mm}
\noindent
{\bf The FOASC and database representation.}
Let   $\mathbb{Z}_m^h=\{{\bm w}_1,\ldots,{\bm w}_N\}$, where
$N=m^{h}$. 
Within our framework, underlying 
 \cite{Efr09} is 
an ${\sf FOASC}(N,k,s,t;\bm \alpha)$ that consists of
   $n$ ${\sf OA}(N,k,s,t)$'s $\bm Q^{(1)},\ldots,\bm Q^{(n)}$, where $N=m^h,k=2^r,s=m^h,t=1$, and
\begin{align}   \label{eqn:foaefr}
Q^{(i)}_{\ell,j}={\bm w}_\ell+d_j \cdot {\bm v}_i
\end{align} 
for  all $i\in[n]$,  $\ell\in [N]$ and $j \in[k]$. 
The function  
$F_{\bm x}(\bm z)$ (Eq. (\ref{eq:Fx})) has domain 
$\mathbb{S}=\mathbb{Z}_m^h$ and range
$\mathbb{R}=\mathbb{F}_p$, where for all $\tau\in[n]$ and $\bm z\in \mathbb{S}$
\begin{align}
\alpha_\tau({\bm z})=g^{\langle {\bf u}_\tau, {\bm z} \rangle}.
\end{align}

\vspace{2mm}
\noindent
{\bf The reconstruction coefficients.}
Note that $P(g^{\langle {\bm u}_\tau, {\bm v}_i \rangle})=1_{\tau=i}$. 
To see that the  FOASC \eqref{eqn:foaefr} gives a $(1,k)$-PIR, it suffices to note that 
for all $i\in[n]$ and $\ell\in[N]$, there is a vector 
\begin{align}
\bm \lambda_\ell^{(i)}=g^{-\langle {\bm u}_i,{\bm w}_\ell \rangle}\cdot\left(
\rho_1,\rho_2,\ldots,\rho_k\right)^\top
\end{align} 
such that   $\bm \alpha(\bm Q^{(i)}_\ell)\cdot \bm \lambda^{(i)}_\ell=
{\bm e}_n^{(i)}$. 
By Theorem \ref{thm:frwk}, the communication complexity of the protocol is
$${\bf C}(n,k)=k(\log |\mathbb{S}|+\log |\mathbb{R}|)=k(h\log m+\log p)={\cal L}_r(n).$$ 

\subsubsection{Sparse Decoding Polynomials}

Efremenko \cite{Efr09} observed that for $r=2$, a specific modulus such as 
$m=511=7\times 73$ may have an $S_m$-decoding polynomial with 3 monomials and thus give 
a $(1,3)$-PIR rather than a $(1,4)$-PIR. He left it as an open problem to
find $S_m$-decoding polynomials that consist of $<2^r$ monomials
for a general modulus $m=p_1p_2\cdots p_r$.

Shortly after \cite{Efr09}, Itoh and Suzuki \cite{IS10}  showed a composition theorem which states that 
if   $m=p_1\cdots p_r$ and $m'=p'_1\cdots p'_{r'}$ are two coprime moduli 
and there exist an $S_m$-decoding polynomial with $\leq k$ monomials and 
an  $S_{m'}$-decoding polynomial with $\leq k'$ monomials, then there is an
$S_{m''}$-decoding polynomial with $k''\leq kk'$ monomials for $m''=mm'.$
In general, we say that a modulus $m=p_1p_2\cdots p_r$ is  {\em good}  
if it has an $S_m$-decoding polynomial with $<2^r$ monomials.  
The composition theorem implies that a good modulus can help reduce the 
number of required servers in the matching vector based PIR protocols. 

Chee, Feng, Ling, Wang and Zhang \cite{CFLWZ13} conducted an in-depth study of 
Efremenko's open problem and showed that a   surprising result: 
{\em Any Mersenne number (numbers of the form $2^\sigma-1$) that is the product of
two distinct primes  must be a good modulus in Efremenko's construction.}
By computer search, they identified 50 new good modulus of such form,  the least of which is 
$M_{11}=2^{11}-1$ and the largest of which is $M_{7331}=2^{7331}-1$.  
With these good moduli, they obtained $(1,k_r)$-PIR protocols with communication complexity
${\cal L}_{r}(n)$, where $r\geq 2$ and
\begin{align} \label{eqn:kr}
k_r=\left\{
        \begin{array}{ll}
          3^{r/2}, & \hbox{$1<r\leq 103$, $r$ is even;} \\
          8\cdot 3^{(r-3)/2}, & \hbox{$1<r\leq 103$, $r$ is odd;} \\
          (\frac{3}{4})^{51}\cdot 2^r, & \hbox{$r\geq 104$.}
        \end{array}
      \right.
\end{align}
However, it remains an open problem to show that there are infinitely many such Mersenne numbers. 
Further study of the good moduli can be found in \cite{ZLZ22}.

\subsubsection{Hermite-Like Interpolations   over  Exotic Rings}

The transition from  Lagrange interpolation   \cite{CGKS95} to
  Hermite interpolation  \cite{Yek07} allows each server to return 
  more information and thus  halves the number of required servers, in order   to
achieve the same asymptotic communication complexity.  
Inspired by this transition, Dvir and Gopi \cite{DG15}   
  halved the number of servers required by Efremenko \cite{Efr09} 
and obtained  a $(1,2^{r-1})$-PIR with communication complexity 
 ${\cal L}_r(n)$ for any integer $r\geq 2$ in
2015. Specifically, for $r=2$, they got a  $(1,2)$-PIR with communication complexity
 ${\cal L}_2(n)$, which eventually broke  
 the communication complexity record of $O(n^{1/3})$   set by  \cite{CGKS95}.
Their construction was obtained with  Hermite-like interpolations with generalized derivatives over an
exotic ring $\cal R$. 

\vspace{2mm}
\noindent
{\bf The FOASC and database representation.}
 Let $m=p_1p_2\cdots p_r$ be the product of $r$ distinct primes $p_1,p_2,\ldots,p_r$. 
Let $S_m$ be the canonical set of $m$. 
Let $\{{\bm u}_1,\ldots,{\bm u}_n\},\{{\bm v}_1,\ldots,{\bm v}_n\}
\subseteq \mathbb{Z}_m^h$ be an    $S_m$-matching family  of size 
$n=\exp(O((\log h)^r/(\log\log h)^{r-1}))$. 
Let ${\cal R}=\mathbb{Z}_m[g]/(g^m-1)$.
 Let   ${\bm w}_1,\ldots,{\bm w}_N$ be all elements of  $\mathbb{Z}_m^h$, where
$N=m^{h}$. 
  For $k=2^{r-1}$ and every $j\in[k]$, set $d_j=j-1$.
Within our framework, underlying
\cite{DG15} is 
an ${\sf FOASC}(N,k,s,t;\bm \alpha)$ that consists of
   $n$ ${\sf OA}(N,k,s,t)$'s  $\bm Q^{(1)},\ldots,\bm Q^{(n)}$, where $N=m^h,k=2^{r-1},s=m^h,t=1$, and
\begin{align} \label{eqn:foadg}
Q^{(i)}_{\ell,j}={\bm w}_\ell+d_j\cdot {\bm v}_i
\end{align}
for  all $i\in[n]$,  $\ell\in [N]$ and $j \in[k]$. 
The function  
$F_{\bm x}(\bm z)$ (Eq. (\ref{eq:Fx})) has domain 
 $\mathbb{S}=\mathbb{Z}_m^h$ and range
$\mathbb{R}={\cal R}^{h+1}$, where for all $\tau\in[n]$ and $\bm z\in \mathbb{S}$
\begin{align}
\alpha_\tau({\bm z})=
\big(1, {\bm u}_\tau\big) \cdot g^{\langle {\bm u}_\tau, {\bm z} \rangle}.
\end{align}

\vspace{2mm}
\noindent
{\bf Hermite interpolation on multiplicative lines.}
Consider  any polynomial  of the form  $$\varphi(\theta)=\varphi_0+\sum_{\delta\in S_m} \varphi_\delta
 \theta^\delta\in \mathbb{R}[\theta].$$
If we denote  $\bar{\varphi}(\theta)=\sum_{\delta  \in S_m} \delta\cdot  \varphi_\delta \cdot \theta^\delta$, 
 then  there is a $(2k)\times (2k)$ matrix $\bm M$ such that
\begin{align*}
\big( \varphi(g^{d_1}),     \bar{\varphi}(g^{d_1}),
\ldots, \varphi(g^{d_k}),     \bar{\varphi}(g^{d_k})\big) 
= (\varphi_0,\ldots,\varphi_\delta,\ldots)\cdot 
\underbrace{
\left(
  \begin{array}{cccc}
    1 & \cdots & g^{d_1 \delta} & \cdots \\
    0 & \cdots & \delta g^{d_1 \delta} & \cdots \\
    \vdots & \cdots & \vdots & \cdots \\
    1 & \cdots & g^{d_k \delta} & \cdots \\
    0 & \cdots & \delta g^{d_k\delta} & \cdots \\
  \end{array}
\right)^\top}_{\bm M}
\end{align*}
Dvir and Gopi \cite{DG15} showed that there is a vector $\bm \mu \in \mathcal{R}^{2k}$ and a ring element 
$\nu\in \mathcal{R}$ such that
$$
(\nu,0,\ldots,0)^\top=\bm M \cdot \bm \mu
$$
and $\nu \bmod p_j\neq 0$ for all $j\in[r]$. 
Therefore, we have that 
$$\big( \varphi(g^{d_1}),     \bar{\varphi}(g^{d_1}),
\ldots, \varphi(g^{d_k}),     \bar{\varphi}(g^{d_k})\big) \cdot \bm \mu=\varphi_0\nu.$$

\vspace{2mm}
\noindent
{\bf The reconstruction coefficients.}
For every $\tau\in[n]$, consider the univariate polynomial   
$$
\phi_\tau(\theta)=g^{\langle  \bm u_\tau, \bm w_\ell\rangle}\cdot \theta^{\langle \bm u_\tau, \bm v_i \rangle}.$$
Note that the constant term of this function is $g^{\langle  \bm u_i, \bm w_\ell\rangle}\cdot 1_{\tau=i}$. Therefore, 
$$\big( \phi_\tau(g^{d_1}),     \bar{\phi}_\tau(g^{d_1}),
\ldots, \phi_\tau(g^{d_k}),     \bar{\phi}_\tau(g^{d_k})\big) \cdot \bm \mu
=\left\{
   \begin{array}{ll}
     g^{\langle  \bm u_i, \bm w_\ell\rangle} \cdot \nu, & \hbox{$\tau=i$;} \\
     0, & \hbox{otherwise.}
   \end{array}
 \right.
$$
On the other hand, it is not hard to see that there is a $(k(h+1))\times (2k)$ matrix $\bm U$ such that 
\begin{align*}
\big( \phi_\tau(g^{d_1}),     \bar{\phi}_\tau(g^{d_1}),
\ldots, \phi_\tau(g^{d_k}),     \bar{\phi}_\tau(g^{d_k})\big)
= \alpha_\tau(\bm Q^{(i)}_\ell)\cdot \bm U. 
\end{align*}
To see the  FOASC   (\ref{eqn:foadg}) gives a $(1,k)$-PIR, it suffices to note that for all
$i\in[n]$ and $\ell\in [N]$, 
\begin{align}
\underbrace{g^{\langle  \bm u_i, \bm w_\ell\rangle}\nu}_{\omega^{(i)}_\ell} \cdot \bm e_n^{(i)}=
\big(\phi_1(0),\phi_2(0),\ldots,\phi_n(0)\big)^\top\cdot \nu=
\bm \alpha(\bm Q^{(i)}_\ell)\cdot \underbrace{\bm U \bm \mu}_{\bm \lambda^{(i)}_\ell}.
\end{align}
By Theorem \ref{thm:frwk}, the communication complexity of the protocol is
 $${\bf C}(n,k)=k(\log |\mathbb{S}|+\log |\mathbb{R}|)=k(h\log m+(h+1)m\log m)=
 {\cal L}_r(n).$$

\subsubsection{Hermite-Like Interpolations  over Finite Fields}

Recently, 	 Ghasemi,   Kopparty and  Sudan \cite{GKS25} proposed a new
method of combining the Hasse derivatives  with the matching vector based PIR protocols 
 \cite{Yek07,Efr09} and obtained  
 a $(1,\kappa_r)$-PIR protocol with communication complexity
${\cal L}_{r+1}(n)$, where   $\kappa_1=2$ and $\kappa_r=k_r$ for all $r\geq 2$.  
Specifically, for $r=2$, they got a $(1,3)$-PIR    
with communication complexity ${\cal L}_3(n)$, which is  more efficient
than Efremenko \cite{Efr09}, the best $(1,3)$-PIR previously.
Their construction was obtained with  Hermite-like interpolations with Hasse derivatives over a
finite field.

\vspace{2mm}
\noindent
{\bf The decoding problem in Efremenko \cite{Efr09}.}
 Let $m=p_1p_2\cdots p_r$ be the product of $r$ distinct primes $p_1,p_2,\ldots,p_r$. 
Let $p$ be a prime/prime power such that $\gcd(p,m)=1$ and $m|(p-1)$. 
Let
$H_m\subseteq \mathbb{F}_p^*$ be the group of $m$th roots of unity and let $g$ be a generator of
$H_m$.
	 Ghasemi,   Kopparty and  Sudan \cite{GKS25} observed that 
the decoding problem in Efremenko \cite{Efr09}
   is nothing else but the problem of 
interpolating a polynomial of the form
$$\varphi_S(\theta)=\sum_{\delta\in S} \varphi_\delta
 \theta^\delta$$
with evaluations of $\varphi_S(\theta)$ on a  set   $B=\{b_1,\ldots,b_k\}\subseteq H_m$, where
$S$ is a subset of $\mathbb{Z}_m$.

\vspace{2mm}
\noindent
{\bf 0-interpolation set.}
Let $m'=mp$ and let $\phi: \mathbb{Z}_m\times \mathbb{Z}_p\rightarrow \mathbb{Z}_{m'}$
 be the Chinese remainder isomorphism, i.e. $\phi^{-1}(a)=(a \bmod m, a\bmod p)$. 
Suppose that $S\subseteq \mathbb{Z}_m, S'\subseteq \mathbb{Z}_{m'}$ and $e\in \{1,2,\ldots,p\}$ is an integer such that
$S'\subseteq \phi(S\times \{0,1,\ldots,e-1\})$. 
For any multivariate polynomial $F({\bm z})\in \mathbb{F}_p[\bm z]=\mathbb{F}_p[z_1,\ldots,z_h]$
and any nonnegative integer vector ${\bm i}=(i_1,\ldots,i_h)$, 
the ${\bm i}$th {\em Hasse derivative} {$F^{({\bm i})}({\bm z})$} is  the coefficient of 
${\bm y}^{\bm i}$ in the expansion of $F({\bm z}+{\bm y})$, i.e.,
$$
F({\bm z}+{\bm y})=\sum_{{\bm i}} F^{({\bm i})}({\bm z}) {\bm y}^{\bm i}.
$$
For any integer $e\geq 1$, let {$F^{(<e)}({\bm z})$} be the 
vector of   ${\bm i}$th Hasse derivatives $F^{({\bm i})}({\bm z})$ 
for all ${\bm i}=(i_1,\ldots,i_h)$ that satisfies $i_1+\cdots +i_h<e$. 
	 Ghasemi,   Kopparty and  Sudan \cite{GKS25} showed that if 
$B$ is a {\em 0-interpolation set for 
$S$}, then $B$ is  a {\em 0-interpolation set  of multiplicity $e$ for 
$S'$}. In other words, if   $\varphi_S(0)$ is a linear combination of
$\{\varphi_S(b)\}_{b\in B}$, then $\varphi_{S'}(0)$
is a linear combination of $\{\varphi_{S'}^{(<e)}(b)\}_{b\in B}$.
This  critical observation allows them to use a matching family
in $\mathbb{Z}_{m'}^h$ to construct PIR but use 
an $S_{m'}$-decoding polynomial that is as sparse as an $S_m$-decoding polynomial to reconstruct, and thus reduce the number of servers required by the resulting PIR.

\vspace{2mm}
\noindent
{\bf The FOASC and database representation.}
Let  $S_m\subseteq \mathbb{Z}_m, S_{m'}\subseteq \mathbb{Z}_{m'}$ be the canonical sets 
of $m$ and $m'$. Let $\bar{S}_m=S_m\cup\{0\}$ and $\bar{S}_{m'}=S_{m'}\cup \{0\}$.  
Then   $\bar{S}_{m'}=\bar{S}_m\times \{0,1\}$. 
By \cite{GKS25}, 
if  $B=\{b_1,\ldots,b_k\}\subseteq H_m$ is a
 0-interpolation set for $\bar{S}_m$, then $B$ is a 0-interpolation set 
 of multiplicity $e=2$ for $\bar{S}_{m'}$. 
Chee, Feng, Ling, Wang and Zhang \cite{CFLWZ13} and Dvir and Gopi \cite{DG15}
showed that
$\bar{S}_m$ has a 0-interpolation set $B=\{b_1,\ldots,b_k\}\subseteq H_m$ of size $k=\kappa_r$. Therefore,  $\bar{S}_{m'}$ has a 
0-interpolation set (i.e., $B$) of multiplicity $e=2$ and size   $\kappa_r$.
  Let $ \{{\bm u}_1,\ldots,{\bm u}_n\}, \{{\bm v}_1,\ldots,{\bm v}_n\}\subseteq \mathbb{Z}_{m'}^h$ be an $S_{m'}$-matching family of size 
$n=\exp(O((\log h)^{r+1}/(\log\log h)^{r}))$.
Let ${\bm g}_1,\ldots, {\bm g}_N$ be  the elements of $H_m^h$, where
$N=m^h$.  
Within our framework, underlying 
\cite{GKS25} is 
an ${\sf FOASC}(N,k,s,t;\bm \alpha)$ that consists of $n$ ${\sf OA}(N,k,s,t)$'s $\bm Q^{(1)},\ldots,\bm Q^{(n)}$, where $N=m^h,k=\kappa_r,s=m^h,t=1$, and
\begin{align}\label{eqn:foagks}
Q^{(i)}_{\ell,j}={\bm g}_\ell \cdot (b_j)^{{\bm v}_i}
\end{align}
for  all $i\in[n]$,  $\ell\in [N]$ and $j \in[k]$. 
The   function
$
 F_{\bm x}({\bm z})
$
has domain $\mathbb{S}=H_m^h$ and range
$\mathbb{R}=\mathbb{F}_p$ such that for all $\tau\in [n]$ and ${\bm z}\in \mathbb{S}$,
\begin{align}
\alpha_\tau({\bm z})=({\bm z}^{{\bm u}_\tau})^{(<e)} .
\end{align}

\vspace{2mm}
\noindent
{\bf Hermite interpolation on with Hasse derivatives.}
Consider  any polynomial  of the form  $$\varphi(\theta)=\varphi_0+\sum_{\delta\in S_{m'}} \varphi_\delta
 \theta^\delta\in \mathbb{R}[\theta].$$
If we denote  $\bar{\varphi}(\theta)=\varphi^{(1)}(\theta)$, 
 then  
 there is a  column vector    $\bm \mu\in \mathbb{R}^{2k}$ such that
\begin{align*}
\varphi_0=\big( \varphi(b_1),     \bar{\varphi}(b_1),
\ldots, \varphi(b_k),     \bar{\varphi}(b_k)\big)\cdot {\bm \mu}.
\end{align*}

\vspace{2mm}
\noindent
{\bf The reconstruction coefficients.}
For every $\tau\in[n]$, consider the univariate polynomial 
$$
\phi_\tau(\theta)=({\bm g}_\ell)^{\bm u_\tau}\cdot \theta^{\langle \bm u_\tau, \bm v_i\rangle}.$$
Note that the constant term of this function is $(\bm g_\ell)^{ \bm u_i}\cdot 1_{\tau=i}$. Therefore, 
$$\big( \phi_\tau(b_1),     \bar{\phi}_\tau(b_1),
\ldots, \phi_\tau(b_k),     \bar{\phi}_\tau(b_k)\big) \cdot \bm \mu
=\left\{
   \begin{array}{ll}
     (\bm g_\ell)^{ \bm u_i}, & \hbox{$\tau=i$;} \\
     0, & \hbox{otherwise.}
   \end{array}
 \right.
$$
On the other hand, it is not hard to see that there is a $(k(h+1))\times (2k)$ matrix $\bm U$ such that 
\begin{align*}
\big( \phi_\tau(b_1),     \bar{\phi}_\tau(b_1),
\ldots, \phi_\tau(b_k),     \bar{\phi}_\tau(b_k)\big) 
= \alpha_\tau(\bm Q^{(i)}_\ell)\cdot \bm U. 
\end{align*}
To see the  FOASC  \eqref{eqn:foagks} gives a $(1,k)$-PIR, it suffices to note that for all $i\in[n]$
and $\ell\in [N]$,
\begin{align}
\bm e_n^{(i)}=
\big(\phi_1(0),\phi_2(0),\ldots,\phi_n(0)\big)^\top\cdot (\bm g_\ell)^{ -\bm u_i}=
\bm \alpha(\bm Q^{(i)}_\ell)\cdot \underbrace{(\bm g_\ell)^{ \bm -u_i}\cdot \bm U \bm \mu}_{\bm \lambda^{(i)}_\ell}.
\end{align}
By Theorem \ref{thm:frwk}, the communication complexity of the protocol is
 $${\bf C}(n,k)=k(\log |\mathbb{S}|+\log |\mathbb{R}|)=k(h\log m+\log p)=
 {\cal L}_{r+1}(n).$$

\section{Open Problems}
\label{sec:open}

The framework of Section \ref{sec:frwk} gives a unified method of constructing information-theoretic 
$(t,k)$-PIR protocols that can capture the most influential constructions to date. 
Given the state of the art  of IT-PIR, there are several
interesting directions for future research. 
 
\vspace{2mm}
\noindent
{\bf   FOASCs for constructing $(t,k)$-PIR with $t=1$.}
The best known constructions of $(t,k)$-PIR for $t=1$ are due to
Dvir and Gopi \cite{DG15} for $k\leq 26$ and 
due to  Ghasemi, Kopparty and Sudan \cite{GKS25} for all  
$k>26$. These constructions require a composite modulus $m$ and  depend on two critical ingredients:
 the superpolynomial sized $S_m$-matching families from Grolmusz \cite{Gro00}
 and the sparse $S_m$-decoding polynomials from 
Chee, Feng, Ling, Wang and Zhang \cite{CFLWZ13}. 
Given a composite modulus $m=p_1p_2\cdots p_r$ with $r$ prime factors, the communication complexity
of the resulting IT-PIR  can be as low as 
${\cal L}_r(n)$ or ${\cal L}_{r+1}(n)$. However, there is still a big gap between 
the  communication complexity of these protocols and the 
well-known lower bounds \cite{Man98}, which show that  ${\bf C}_{\cal P}(n,k)\geq \Omega(k^2/(k-1)\cdot \log n)$
for any $k$-server IT-PIR. 
New improved constructions of ${\sf FOASC}(N,k,s,1;\bm \alpha)$
may help close the  gaps by giving protocols with lower communication complexity. 
A natural idea of developing better FOASCs includes constructing larger
 $S_m$-matching families \cite{DGY12,CLWZ13,BDL13} or much sparser $S_m$-decoding polynomials. 
It is an interesting open problem to construct
new ${\sf FOASC}(N,k,s,1;\bm \alpha)$ that may result in
$(1,k)$-PIR with   communication complexity   $o({\cal L}_{r}(n))$ for 
$k\leq 26$ and $o({\cal L}_{r+1}(n))$ for 
$k>26 $.

\vspace{2mm}
\noindent
{\bf   FOASCs for constructing $(t,k)$-PIR with $t>1$.}
 The best known constructions of $(t,k)$-PIR for $t>1$ are due to Woodruff and Yekhanin \cite{WY05}
and achieve a communication complexity of  $O(n^{1/\lfloor (2k-1)/t \rfloor})$, which however is much worse
 than the matching vector based $(1,k)$-PIR protocols with subpolynomial communication. 
 Barkol,  Ishai and Weinreb \cite{BIW07} proposed a general transformation
from $(1,k)$-PIR to $(t,k^t)$-PIR that preserves the asymptotic communication complexity.
By  applying this transformation to the matching vector based $(1,k)$-PIR one can obtain 
$(t,k^t)$-PIR with subpolynomial communication for any $t>1$. However, such a transformation results 
in an exponential blowup in the number of required servers. In particular, for a general 
number $k'$ that is not a $t$th power of some integer, it could be very inefficient or even impossible
to use  such a transformation to
construct a $(t,k')$-PIR.  
It is an interesting open problem to construct
new ${\sf FOASC}(N,k,s,t;\bm \alpha)$ that may result in  
$(t,k)$-PIR with   communication complexity   $o(n^{1/\lfloor (2k-1)/t \rfloor})$
for constant $t$ and $k$.

\section{Conclusions}
\label{sec:con}

In this review, we formally define  
families of orthogonal arrays with span capability (FOASC) and 
provide a unified framework for constructing 
multi-server IT-PIR protocols.
We show how to capture the most influential IT-PIR protocols with the proposed framework.
We also put forward several interesting open problems concerning the construction of
FOASCs. 
With the proposed framework, we expect to inspire new FOASCs and thus more efficient
IT-PIR protocols  with 
communication complexity approaching the best known lower bounds.

\section*{Acknowledgements}
This research is partially supported by the National Natural Science Foundation of China (grant No. 
62372299)  and the Open Project Funding of the Key Laboratory of Cyberspace Security Defense  (grant No. 2024-MS-13).
 
\bibliographystyle{plain}

\begin{thebibliography}{111}
 


\bibitem{CGKS95}
	Benny Chor, Oded Goldreich, Eyal Kushilevitz, Madhu Sudan:
Private information retrieval. FOCS 1995: 41--50.

 \bibitem{KO97}
Eyal Kushilevitz, Rafail Ostrovsky: Replication is not needed: single database, computationally-private information retrieval. FOCS 1997: 364--373.


\bibitem{WGY+17}
Frank Wang, Catherine Yun, Shafi Goldwasser, Vinod Vaikuntanathan, Matei Zaharia:
Splinter: practical private queries on public data. NSDI 2017: 299--313.



\bibitem{AS16}
Sebastian Angel, Srinath T. V. Setty: Unobservable communication over fully untrusted
infrastructure. OSDI 2016: 551--569.


\bibitem{ACL+18}
Sebastian Angel, Hao Chen, Kim Laine, Srinath T. V. Setty:
PIR with compressed queries and amortized query processing. IEEE Symposium on Security and Privacy 2018: 962--979.


\bibitem{GCM+16}
 Trinabh Gupta, Natacha Crooks, Whitney Mulhern, Srinath T. V. Setty, Lorenzo
Alvisi, Michael Walfish: Scalable and private media consumption with Popcorn.
NSDI 2016: 91--107.

\bibitem{KRS19}
Daniel Kales, Christian Rechberger, Thomas Schneider, Matthias Senker, Christian Weinert:
Mobile private contact discovery at scale. USENIX Security Symposium 2019: 1447--1464.


\bibitem{KG21}
Dmitry Kogan and Henry Corrigan-Gibbs. 
Private blocklist lookups with Checklist. USENIX Security Symposium 2021: 875--892.

\bibitem{BKM+12}
Michael Backes, Aniket Kate, Matteo Maffei, and Kim
Pecina. ObliviAd: provably secure and practical online
behavioral advertising. IEEE Symposium on Security and Privacy 2012: 257--271.


\bibitem{SHD21}
Sacha Servan-Schreiber, Kyle Hogan, and Srinivas Devadas. AdVeil: A private targeted--advertising ecosystem.
IACR Cryptology
ePrint Archive 2021:  1032, 2021.






\bibitem{HHG+23}
Alexandra Henzinger, Matthew M. Hong, Henry Corrigan-Gibbs, Sarah Meiklejohn,
Vinod Vaikuntanathan: One server for the price of two: simple and fast single-server
private information retrieval. USENIX Security Symposium 2023: 3889--3905.




\bibitem{HDG+23}
Alexandra Henzinger, Emma Dauterman, Henry Corrigan--Gibbs, Nickolai Zeldovich:
Private web search with Tiptoe. SOSP 2023: 396--416.


\bibitem{HOG11}
Ryan Henry, Femi G. Olumofin, Ian Goldberg: Practical PIR for electronic commerce.
CCS 2011: 677--690.

 



\bibitem{GKK+08}
Gabriel Ghinita, Panos Kalnis, Ali Khoshgozaran, Cyrus Shahabi, Kian-Lee Tan: Private
queries in location based services: anonymizers are not necessary. SIGMOD Conference
2008: 121--132. 



\bibitem{CHL+18}
Hao Chen, Zhicong Huang, Kim Laine, Peter Rindal: Labeled PSI from fully homomorphic
encryption with malicious security. CCS 2018: 1223--1237.

\bibitem{Google}
Google. \href{https://services.google.com/fh/files/misc/zero_touch_enrollment_for_chrome_os_devices_one_pager.pdf}
{Zero-touch enrollment Chrome OS devices}.





\bibitem{MW22}
 Samir Jordan Menon, David J. Wu: SPIRAL: fast, high-rate single-server PIR via
FHE composition.  IEEE Symposium on Security and Privacy 2022: 930--947. 

\bibitem{DPC23}
 Alex Davidson, Gonçalo Pestana, Sofía Celi: FrodoPIR: simple, scalable, singleserver
private information retrieval. Proceedings on Privacy Enhancing Technologies 2023 (1): 365--383.

\bibitem{KT00}
Jonathan Katz, Luca Trevisan:  On the efficiency of local decoding procedures for error-correcting codes. STOC 2000: 80–86.

\bibitem{Yek12}
Sergey Yekhanin:
Locally decodable codes. Foundations and Trends in Theoretical Computer Science 6(3): 139--255 (2012)



\bibitem{Z14}
Liang Feng Zhang:
A coding-theoretic application of Baranyai's theorem. IEEE Transactions on Information Theory 60(11): 6988--6992 (2014)


\bibitem{IK04}
Yuval Ishai, Eyal Kushilevitz:
On the hardness of information-theoretic multiparty computation. EUROCRYPT 2004: 439--455.

\bibitem{BCWZ12}
Amos Beimel, Yeow Meng Chee, Huaxiong Wang, Liang Feng Zhang:
Communication-efficient distributed oblivious transfer. Journal of Computer and System Sciences 78(4): 1142--1157 (2012)


\bibitem{BIKM99}
Amos Beimel, Yuval Ishai, Eyal Kushilevitz, Tal Malkin:
One-way functions are essential for single-server private information retrieval. STOC 1999: 89--98.


\bibitem{CMO00}
	Giovanni Di Crescenzo, Tal Malkin, Rafail Ostrovsky:
Single database private information retrieval implies oblivious transfer. EUROCRYPT 2000: 122--138.

\bibitem{NP99}
Moni Naor, Benny Pinkas:
Oblivious transfer and polynomial evaluation. STOC 1999: 245--254.



\bibitem{IKO05}
	Yuval Ishai, Eyal Kushilevitz, Rafail Ostrovsky:
Sufficient conditions for collision-resistant hashing. TCC 2005: 445--456.


\bibitem{KR06}
Yael Tauman Kalai, Ran Raz:
Succinct non-interactive zero-knowledge proofs with preprocessing for LOGSNP. FOCS 2006: 355--366.


 

\bibitem{Amb97}
	Andris Ambainis:
Upper bound on communication complexity of private information retrieval. ICALP 1997: 401--407.

\bibitem{Ito99}
Toshiya Itoh:
Efficient private information retrieval. IEICE Transactions on Fundamentals of Electronics, Communications and Computer Sciences E82-A(1): 11--20, 1999.


\bibitem{IK99}
	Yuval Ishai, Eyal Kushilevitz:
Improved upper bounds on information-theoretic private information retrieval. STOC 1999: 79--88.
   
\bibitem{BI01}
   Amos Beimel, Yuval Ishai:
Information-theoretic private information retrieval: a unified construction. ICALP 2001: 912--926.

\bibitem{WY05}
David P. Woodruff, Sergey Yekhanin:
A geometric approach to information-theoretic private information retrieval. CCC 2005: 275--284.


\bibitem{Yek07}
Sergey Yekhanin:
Towards 3-query locally decodable codes of subexponential length. STOC 2007: 266--274. 



\bibitem{Rag07}
	Prasad Raghavendra:
A note on Yekhanin's locally decodable codes. Electronic Colloquium on Computational Complexity TR07 (2007)

\bibitem{KY08}
	Kiran S. Kedlaya, Sergey Yekhanin:
Locally decodable codes from nice subsets of finite fields and prime factors of Mersenne numbers. CCC 2008: 175--186. 


\bibitem{Efr09}
	Klim Efremenko:
3-query locally decodable codes of subexponential length. STOC 2009: 39--44. 

\bibitem{IS10}
	Toshiya Itoh, Yasuhiro Suzuki:
Improved constructions for query-efficient locally decodable codes of subexponential length. 
IEICE Transactions on Information and Systems  93--D(2): 263--270 (2010)




\bibitem{CFLWZ13}
Yeow Meng Chee, Tao Feng, San Ling, Huaxiong Wang, Liang Feng Zhang:
Query-efficient locally decodable codes of subexponential length. Computational Complexity  22(1): 159--189 (2013)




\bibitem{DG15}
	Zeev Dvir, Sivakanth Gopi:
2-Server PIR with sub-polynomial communication. STOC 2015: 577--584. 

\bibitem{ABL24}
	Bar Alon, Amos Beimel, Or Lasri:
Simplified PIR and CDS Protocols and improved linear secret-sharing protocols. IACR Cryptology
ePrint Archive 2024: 1599 (2024)


\bibitem{GKS25}
	Fatemeh Ghasemi, Swastik Kopparty, Madhu Sudan:
Improved PIR protocols using matching vectors and derivatives. STOC 2025: 1648--1656. 




\bibitem{CMS99}
Christian Cachin, Silvio Micali, Markus Stadler: Computationally private information retrieval with polylogarithmic communication. EUROCRYPT 1999: 402--414.

\bibitem{Cha04}
Yan-Cheng Chang: Single database private information retrieval with logarithmic communication. ACISP 2004: 50--61.

\bibitem{Lip05}
 Helger Lipmaa: An oblivious transfer protocol with log-squared communication. ISC 2005: 314--328.

\bibitem{GR05}
Craig Gentry, Zulfikar Ramzan: Single-database private information retrieval with constant communication rate. ICALP 2005: 803--815.

\bibitem{MG07}
	Carlos Aguilar Melchor, Philippe Gaborit:
A lattice-based computationally-efficient private information retrieval protocol. IACR Cryptology
ePrint Archive 2007: 446 (2007)
 
 \bibitem{Lip09}
 Helger Lipmaa: First CPIR protocol with data-dependent computation. ICISC 2009: 193--210.
 
  
 \bibitem{GKL10}
 Jens Groth, Aggelos Kiayias, Helger Lipmaa: Multi-query computationally-private information retrieval with constant communication rate. Public Key Cryptography 2010: 107--123.
 
  \bibitem{YKPB13}
 Xun Yi, Md. Golam Kaosar, Russell Paulet, Elisa Bertino: Single-database private information retrieval from fully homomorphic encryption. IEEE Transactions on Knowledge and Data Engineering 25(5): 1125--1134 (2013)
 
   \bibitem{DC14}
 Changyu Dong, Liqun Chen: A fast single server private information retrieval protocol with low communication cost. ESORICS (1) 2014: 380--399
 
    \bibitem{KLLPT15}
 Aggelos Kiayias, Nikos Leonardos, Helger Lipmaa, Kateryna Pavlyk, Qiang Tang: Optimal rate private information retrieval from homomorphic encryption. Proceedings on Privacy Enhancing Technologies 2015(2): 222--243.
 
 \bibitem{LP17}
 Helger Lipmaa, Kateryna Pavlyk: A simpler rate-optimal CPIR protocol. Financial Cryptography 2017: 621--638
 
 \bibitem{BIM00} 
 	Amos Beimel, Yuval Ishai, Tal Malkin:
Reducing the servers computation in private information retrieval: PIR with preprocessing. CRYPTO 2000: 55--73.

\bibitem{SC07}
Radu Sion, Bogdan Carbunar:
On the practicality of private information retrieval. NDSS 2007.

\bibitem{SWZ24}
Jaspal Singh, Yu Wei, Vassilis Zikas:
Information-theoretic multi-server private information retrieval with client preprocessing. TCC (4) 2024: 423--450.

\bibitem{ISW24}
Yuval Ishai, Elaine Shi, Daniel Wichs:
PIR with client-side preprocessing: information-theoretic constructions and lower bounds. CRYPTO (9) 2024: 148--182.

\bibitem{BIPW17}
Elette Boyle, Yuval Ishai, Rafael Pass, Mary Wootters:
Can we access a database both locally and privately? TCC (2) 2017: 662--693.

\bibitem{CHR17}
	Ran Canetti, Justin Holmgren, Silas Richelson:
Towards doubly efficient private information retrieval. TCC (2) 2017: 694--726.

\bibitem{CK20}
Henry Corrigan-Gibbs, Dmitry Kogan:
Private information retrieval with sublinear online time. EUROCRYPT (1) 2020: 44--75.

\bibitem{GZS24}
Ashrujit Ghoshal, Mingxun Zhou, Elaine Shi:
Efficient pre-processing PIR without public-key cryptography. EUROCRYPT (6) 2024: 210--240.

\bibitem{LMW23}
Wei--Kai Lin, Ethan Mook, Daniel Wichs:
Doubly efficient private information retrieval and fully homomorphic ram computation from ring LWE. STOC 2023: 595--608.

\bibitem{RMS24}
Ling Ren, Muhammad Haris Mughees, I Sun:
Simple and practical amortized sublinear private information retrieval using dummy subsets. CCS 2024: 1420--1433.


\bibitem{PPY18}
Sarvar Patel, Giuseppe Persiano, Kevin Yeo:
Private stateful information retrieval. CCS 2018: 1002--1019.


\bibitem{SACM21}
	Elaine Shi, Waqar Aqeel, Balakrishnan Chandrasekaran, Bruce M. Maggs:
Puncturable pseudorandom sets and private information retrieval with near-optimal online
 bandwidth and time. CRYPTO (4) 2021: 641--669.

 

\bibitem{ZPZS24}
Mingxun Zhou, Andrew Park, Wenting Zheng, Elaine Shi:
Piano: extremely simple, single-server PIR with sublinear server computation. IEEE Symposium on Security and Privacy  2024: 4296--4314.





\bibitem{ALP+21}
Asra Ali, Tancr\'{e}de Lepoint, Sarvar Patel, Mariana Raykova, Phillipp Schoppmann, Karn Seth, Kevin Yeo:
Communication-computation trade-offs in PIR. USENIX Security Symposium 2021: 1811--1828.

\bibitem{MCR21}
Muhammad Haris Mughees, Hao Chen, Ling Ren:
OnionPIR: response efficient single-server PIR. CCS 2021: 2292--2306.

\bibitem{MK22}
Rasoul Akhavan Mahdavi, Florian Kerschbaum:
Constant-weight PIR: single-round keyword PIR via constant-weight equality operators. USENIX Security Symposium 2022: 1723--1740.


\bibitem{GHK22}
Henry Corrigan-Gibbs, Alexandra Henzinger, Dmitry Kogan:
Single-server private information retrieval with sublinear amortized time. EUROCRYPT (2) 2022: 3--33.



\bibitem{ZLTS23}
	Mingxun Zhou, Wei-Kai Lin, Yiannis Tselekounis, Elaine Shi:
Optimal single-server private information retrieval. EUROCRYPT (1) 2023: 395--425.

\bibitem{LP23}
Arthur Lazzaretti, Charalampos Papamanthou:
TreePIR: sublinear-time and polylog-bandwidth private information retrieval from DDH. CRYPTO (2) 2023: 284--314.

\bibitem{LLW24}
Ming Luo, Feng-Hao Liu, Han Wang:
Faster FHE-based single-server private information retrieval. CCS 2024: 1405--1419.

\bibitem{LMRS24}
Baiyu Li, Daniele Micciancio, Mariana Raykova, Mark Schultz:
Hintless single-server private information retrieval. CRYPTO (9) 2024: 183--217.



\bibitem{LP24}
	Arthur Lazzaretti, Charalampos Papamanthou:
Single pass client-preprocessing private information retrieval. USENIX Security Symposium 2024: 5967--5984.

\bibitem{FLLP24}
	Ben Fisch, Arthur Lazzaretti, Zeyu Liu, Charalampos Papamanthou:
ThorPIR: single server PIR via homomorphic thorp shuffles. CCS 2024: 1448--1462.

\bibitem{MW24}
Samir Jordan Menon, David J. Wu:
YPIR: high-throughput single-server PIR with silent preprocessing. USENIX Security Symposium 2024: 5985--6002.


\bibitem{HPPY25}
Alexander Hoover, Sarvar Patel, Giuseppe Persiano, Kevin Yeo:
Plinko: single-server PIR with efficient updates via invertible PRFs. EUROCRYPT (6) 2025: 3--33.

\bibitem{WR25}
Zhikun Wang, Ling Ren:
Single-server client preprocessing PIR with tight space-time trade-off. EUROCRYPT (6) 2025: 94--122.



\bibitem{BS02}
Amos Beimel, Yoav Stahl:
Robust information-theoretic private information retrieval. SCN 2002: 326--341.

\bibitem{Gol07}
	Ian Goldberg:
Improving the robustness of private information retrieval. IEEE Symposium on Security and Privacy 2007: 131--148. 

\bibitem{DGH12}
	Casey Devet, Ian Goldberg, Nadia Heninger:
Optimally robust private information retrieval. USENIX Security Symposium 2012: 269--283.

 
\bibitem{Kur19}
 Kaoru Kurosawa: How to correct errors in multi-server PIR. ASIACRYPT (2) 2019:
564--574.

\bibitem{ZWW22}
 Liang Feng Zhang, Huaxiong Wang, Li-Ping Wang:
Byzantine-robust private information retrieval with low communication and efficient
decoding. AsiaCCS 2022: 1079--1085.

\bibitem{EKN22}
Reo Eriguchi, Kaoru Kurosawa, Koji Nuida: On the optimal communication complexity
of error-correcting multi-server PIR. TCC (3) 2022: 60--88.

\bibitem{EKN24}Reo Eriguchi, Kaoru Kurosawa, Koji Nuida: Efficient and generic methods to achieve
active security in private information retrieval and more advanced database search.
EUROCRYPT (5) 2024: 92--121.



\bibitem{ZS14} Liang Feng Zhang, Reihaneh Safavi-Naini: Verifiable
multi-server private information retrieval. ACNS 2014: 62--79.

\bibitem{ZLL22} Lin Zhu, Changlu Lin, Fuchun Lin, Liang Feng Zhang: Postquantum
cheating detectable private information retrieval. SEC 2022: 431--448.

\bibitem{CTD23}
Quang Cao, Hong-Yen Tran, Son Hoang Dau, Xun Yi, Emanuele Viterbo, Chen Feng,
Yu-Chih Huang, Jingge Zhu, Stanislav Kruglik, Han Mao Kiah: Committed private
information retrieval. ESORICS (1) 2023: 393--413.

\bibitem{ZW22}
 Liang Feng Zhang, Huaxiong Wang: Multi-server verifiable
computation of low-degree polynomials. IEEE Symposium on Security and Privacy  2022: 596--613.

\bibitem{KZ22}
Pengzhen Ke, Liang Feng Zhang: Two-server private
information retrieval with result verification. ISIT 2022: 408--413.

\bibitem{KZ23}
Pengzhen Ke, Liang Feng Zhang: Private information
retrieval with result verification for more servers. ACNS 2023: 197--216.

\bibitem{KDK24}
Stanislav Kruglik, Son Hoang Dau, Han Mao Kiah, Huaxiong Wang, Liang Feng
Zhang: Querying twice to achieve information-theoretic verifiability
in private information retrieval. IEEE Transactions on Information Forensics and Security 19: 8172--8187
(2024)

\bibitem{EKN221}
 Reo Eriguchi, Kaoru Kurosawa, Koji Nuida: Multi-server PIR with full error detection
and limited error correction. ITC 2022: 1:1--1:20.

\bibitem{CNG23}
Simone Colombo, Kirill Nikitin, Henry Corrigan-Gibbs, David J. Wu, Bryan Ford:
Authenticated private information retrieval. USENIX Security Symposium 2023:
3835--3851.

\bibitem{DT24}
Marian Dietz, Stefano Tessaro: Fully malicious authenticated PIR. CRYPTO (9) 2024:
113--147



\bibitem{Gas04} 
William I. Gasarch. A survey on private information retrieval. Bulletin of the EATCS, 82:72–107, 2004.



\bibitem{Bei08}
Amos Beimel: Private information retrieval: a primer, www.cs.bgu.ac.il /beimel/ Papers/PIRsurvey.ps, 2008

\bibitem{Blu11}
Carlo Blundo: Private information retrieval,  Encyclopedia of Cryptography
and Security, pp. 974--976, 2011. 



\bibitem{OS07}
Rafail Ostrovsky, William E. Skeith III: A survey of single-database private information retrieval: Techniques and Applications. Public Key Cryptography 2007: 393--411.




\bibitem{HSS99}
A.S. Hedayat, N.J.A. Sloane, J. Stufken: Orthogonal arrays: theory and applications, Springer (1999)

\bibitem{BIKR02}
	Amos Beimel, Yuval Ishai, Eyal Kushilevitz, Jean-François Raymond:
Breaking the $O(n^{1/(2k-1)})$ barrier for information-theoretic private information retrieval. FOCS 2002: 261--270.

\bibitem{ZLZ22}
Lin Zhu, Wen Ming Li, Liang Feng Zhang: On the modulus in matching vector codes.
The Computer Journal 65(12), 2991–2997. 
(2022).





\bibitem{Gro00}
	Vince Grolmusz:
Superpolynomial size set-systems with restricted intersections mod 6 and explicit Ramsey graphs. Combinatorica 20(1): 71--86 (2000)


\bibitem{Man98}
Eran Mann. Private access to distributed information. Masters Thesis from Technion-Israel Institute of
Technology, Haifa, 1998


\bibitem{DGY12}
	Zeev Dvir, Parikshit Gopalan, Sergey Yekhanin:
Matching vector codes. FOCS 2010: 705--714


\bibitem{BDL13}
	Abhishek Bhowmick, Zeev Dvir, Shachar Lovett:
New bounds for matching vector families. STOC 2013: 823--832
 

\bibitem{CLWZ13}
Yeow Meng Chee, San Ling, Huaxiong Wang, Liang Feng Zhang:
Upper bounds on matching families in $\mathbb{Z}_{pq}^n$. IEEE Transactions on Information Theory 59(8): 5131--5139 (2013)





\bibitem{BIW07}
Omer Barkol, Yuval Ishai, Enav Weinreb:
On locally decodable codes, self-correctable codes, and $t$-private PIR. APPROX-RANDOM 2007: 311--325.


\end{thebibliography}

\end{document}